\newcommand{\hypercolor}{blue}
\newtheorem{thm}{Theorem}
\newtheorem{postulateno}{Postulate}
\newtheorem{lemma}[thm]{Lemma}
\newtheorem{cor}[thm]{Corollary}
\newcounter{proof}
\NewDocumentEnvironment{proof}{o}
 {
  \par\medskip
  \noindent
  \textbf{Proof \IfNoValueTF{#1}{}{~of~#1} ~}
 }
 {\QED\par\smallskip}
\newcounter{postulate}
\renewcommand{\thepostulate}{\arabic{postulate}}
\NewDocumentEnvironment{postulate}{oo}
 {
  \refstepcounter{postulate}
  \begin{postulateno}
  \textbf{\hspace{-0.5em}\IfNoValueTF{#2}{\thepostulate}{#2} ~(\IfNoValueTF{#1}{}{#1})}
 }
 {
  \end{postulateno}
 }
\newcommand{\mD}{\mathcal{D}}
\newcommand{\mE}{\mathcal{E}}
\newcommand{\Natural}{\mathbb{N}}
\newcommand{\QED}{\hspace*{0pt}\hfill $\blacksquare$}
\DeclareMathOperator*{\argmax}{argmax}
\DeclarePairedDelimiter{\gauss_sym}{\lfloor}{\rfloor}
\DeclareMathAlphabet{\mymathbb}{U}{BOONDOX-ds}{m}{n}
\newcommand{\zero}{\mymathbb{0}}
\newcommand{\titlename}{Unambiguous discrimination of the change point for quantum channels}
\newcommand{\cI}{\mathcal{I}}
\newcommand{\cP}{\mathcal{P}}
\newcommand{\I}{I}
\newcommand{\V}{V}
\newcommand{\W}{W}
\newcommand{\bV}{\mathbf{V}}
\newcommand{\bW}{\mathbf{W}}
\newcommand{\Pos}{\mathsf{Pos}}
\newcommand{\ot}{\otimes}
\newcommand{\tpsi}{\tilde{\psi}}
\newcommand{\tvarphi}{\tilde{\varphi}}
\newcommand{\tG}{\tilde{G}}
\newcommand{\tp}{\tilde{p}}
\newcommand{\tD}{\tilde{D}}
\newcommand{\tU}{\tilde{U}}
\newcommand{\w}{\omega}
\renewcommand{\ol}[1]{\overline{#1}}
\newcommand{\op}{\ol{p}}
\newcommand{\opt}{\star}
\newcommand{\oP}{\ol{P}}
\newcommand{\uP}[1]{\ol{P^{(#1)}}}
\newcommand{\lP}[1]{\underline{P^{(#1)}}}
\newcommand{\question}{\text{``?''}}
\newcommand{\relmiddle}[1]{\mathrel{}\middle#1\mathrel{}}
\newcommand{\relmid}{\relmiddle{\vert}}
\let\ast\relax
\DeclareMathOperator{\ast}{\circledast}
\setlist[enumerate]{label=\arabic*), leftmargin=3em, itemsep=0pt, parsep=0pt, labelwidth=5em}
\definecolor{memo}{RGB}{128,0,255}
\definecolor{gray}{RGB}{128,128,128}
\newcommand{\Discard}[1]{}
\newcommand{\EN}[1]{}
\begin{document}

\preprint{APS/123-QED}

\title{\titlename}

\affiliation{%
 Quantum Information Science Research Center, Quantum ICT Research Institute, Tamagawa University,
 Machida, Tokyo 194-8610, Japan
}%

\author{Kenji Nakahira}
\affiliation{%
 Quantum Information Science Research Center, Quantum ICT Research Institute, Tamagawa University,
 Machida, Tokyo 194-8610, Japan
}%

\date{\today}

\begin{abstract}
 Identifying the precise moment when a quantum channel undergoes a change is
 a fundamental problem in quantum information theory. 
 We study how accurately one can determine the time at which a channel transitions to another.
 We investigate the quantum limit of the average success probability in unambiguous discrimination,
 in which errors are completely avoided by allowing inconclusive results
 with a certain probability.
 This problem can be viewed as a quantum process discrimination task,
 where the process consists of a sequence of quantum channels; however,
 obtaining analytical solutions for quantum process discrimination is generally extremely challenging.
 In this paper, we propose a method to derive lower and upper bounds on
 the maximum average success probability in unambiguous discrimination.
 In particular, when the channels before and after the change are unitary,
 we show that the maximum average success probability can be analytically expressed
 in terms of the length of the channel sequence and the discrimination limits for the two channels.
\end{abstract}

\pacs{03.67.Hk}
\maketitle



\section{Introduction}

The time at which the properties of an object change can carry important information.
If such time can be detected with high accuracy, then it may lead to new discoveries
or increase the likelihood of responding appropriately to the change.
The problem of detecting the time at which the properties of an object change abruptly
is known as the change point problem, and it has been studied in a wide range of fields,
including statistical analysis \cite{Page-1954,Page-1955,Bro-Dar-1993,Car-Mul-Sie-1994}.
Considering situations such as detecting changes in microscopic objects or weak signals,
it is natural to explore a quantum version of this problem.

A quantum version of the change point problem can be formulated as follows.
Suppose a quantum process composed of $N$ operations is applied sequentially.
For the first $k$ evaluations, the operation is implemented using a channel $\Lambda_0$,
then switches to a channel $\Lambda_1$ immediately after the $k$-th evaluation,
and remains $\Lambda_1$ for all subsequent evaluations.
For the first $k$ evaluations, the operation is realized by a channel $\Lambda_0$,
then transitions to another channel $\Lambda_1$ after the $k$-th evaluation,
and remains governed by the channel $\Lambda_1$ thereafter.
The goal is to identify the change point $k \in \{0,1,\dots,N\}$ as accurately as possible.
For example, when $N = 3$, the problem becomes one of discriminating
between the channel sequences $(\Lambda_1,\Lambda_1,\Lambda_1)$,
~$(\Lambda_0,\Lambda_1,\Lambda_1)$, ~$(\Lambda_0,\Lambda_0,\Lambda_1)$,
and $(\Lambda_0,\Lambda_0,\Lambda_0)$.
It is assumed that $\Lambda_0$ and $\Lambda_1$ are known, and that the change point is
equally likely to occur at any evaluation step.
If $\Lambda_0$ and $\Lambda_1$ can be perfectly distinguished in a single shot,
then the change point $k$ can be identified with certainty.
However, if this is not the case, then the optimal discrimination must be performed
based on a certain evaluation criterion.
This change point identification problem can be regarded as a special case of
quantum process discrimination.
Moreover, quantum states can be considered as a special case of quantum channels.

The simplest strategy in discrimination problems is arguably the one
that maximizes the average success probability, known as minimum-error discrimination
\cite{Hel-1976,Hol-1973,Yue-Ken-Lax-1975}.
The quantum change point problem has been studied under this minimum-error strategy
\cite{Sen-Bag-Cal-Chi-2016,Nak-2022-cp}.
Another representative strategy in discrimination problems is the unambiguous strategy,
which has also been widely investigated \cite{Iva-1987,Die-1988,Per-1988}.
In this strategy, inconclusive results, i.e., answers of ``unknown'', may be returned,
but whenever a conclusive result is given, it is guaranteed to be correct.
In other words, discrimination without error is achieved.
One potential application of the unambiguous strategy is efficient eavesdropping
in cryptographic systems \cite{Eke-Hut-Pal-Per-1994,Dus-Jah-Lut-2000}.
For the quantum change point problem, the unambiguous strategy has been studied
in the case in which both $\Lambda_0$ and $\Lambda_1$ are pure states.
In that case, the problem has been formulated as a semidefinite programming problem,
and an analytical solution for the maximum average success probability has been obtained
by considering its dual problem \cite{Sen-Cal-Mun-2017}.
However, to the best of our knowledge, the maximum average success probability for mixed states
or more general channels has not been known.

Here, we consider the unambiguous strategy in the case in which
both $\Lambda_0$ and $\Lambda_1$ are general quantum channels.
This discrimination can be modeled using the concept of a tester
\cite{Chi-Dar-Per-2008,Chi-Dar-Per-2009}, which is also referred to as a quantum strategy
\cite{Gut-Wat-2007} (see Fig.~\ref{fig:intro}).
When both $\Lambda_0$ and $\Lambda_1$ are quantum states, the problem reduces to
quantum state discrimination, and the tester corresponds simply to a quantum measurement.
In contrast, when they are general quantum channels, the tester consists of
multiple quantum operations, including channels and measurements,
which significantly complicates its optimization.
For instance, one must consider all possible discrimination strategies,
including those that use entangled input states with ancillary systems and
adaptive strategies.
Although this problem can be formulated as a semidefinite programming problem
\cite{Nak-Kat-2021-general}, obtaining analytical solutions is generally extremely challenging.
Therefore, we explore an alternative approach that differs from semidefinite programming.
In this paper, we present a method for deriving upper and lower bounds on
the maximum average success probability that applies when $\Lambda_0$ and $\Lambda_1$ are
arbitrary channels, including those involving quantum states.
Any tester for the change point problem, when used to discriminate between
the two channels $\Lambda_0$ and $\Lambda_1$, can never outperform
the optimal discrimination performance for these channels.
The upper bound is derived by considering only this natural constraint.
Importantly, although the upper bound is derived from a simple and intuitive constraint,
we show that the proposed upper bound coincides with the maximum average success probability
when both $\Lambda_0$ and $\Lambda_1$ are unitary channels.
This leads to an analytical expression for the maximum average success probability
in this case.
In process discrimination problems beyond state discrimination, analytical solutions
for the maximum average success probability have been obtained only in very special cases,
such as binary scenarios or highly symmetric settings (e.g., \cite{Zim-Sed-2010,Chi-Dar-Roe-2013}).
The present result constitutes a rare case in which an analytical solution is obtained
for multiple process discrimination problems that lack such symmetry.
\begin{figure}[bt]
 \centering
 \includegraphics[scale=1.0]{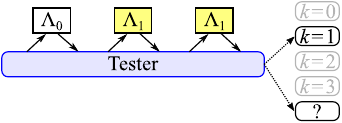}
 \caption{Identification of the change point with zero error
 (in the case of correct answer $k = 1$ for $N = 3$).
 Such discrimination can be represented using a tester.
 In this case, the unambiguous strategy returns either the result $1$ or an inconclusive result.}
 \label{fig:intro}
\end{figure}

\section{Upper bound of the maximum average success probability} \label{sec:UB}

For each $k \in \{0,\dots,N\}$, consider a process
$\mE_k \coloneqq (\Lambda_{k < 1}, \Lambda_{k < 2}, \dots, \Lambda_{k < N})$
consisting of $N$ channels.
The $n$-th channel of $\mE_k$ is $\Lambda_{k < n} \in \{ \Lambda_0, \Lambda_1 \}$,
where $\Lambda_{k < n}$ is $\Lambda_1$ if $k < n$ holds, and $\Lambda_0$ otherwise.
The change point problem can be interpreted as the task of discriminating
between the $N+1$ processes $\{ \mE_k \}_{k=0}^N$.
Let $P^{(N)}$ denote the maximum average success probability for this problem.

To derive an upper bound for $P^{(N)}$, consider the problem of discriminating between
two channels $\Lambda_0$ and $\Lambda_1$, and let $f(p)$ denote the maximum probability
of correctly identifying $\Lambda_1$ under the constraint that
$\Lambda_0$ is correctly identified with probability $p$.
Let $\op$ denote the maximum allowable value of $p$.
In the case of unambiguous discrimination, $\op$ is generally less than $1$.
If $f(0) > \op$ holds, then we swap $\Lambda_0$ and $\Lambda_1$ so that $f(0) \le \op$ holds.
For any $i \in \{1,\dots,N\}$, the channels $\Lambda_0$ and $\Lambda_1$ can be
transformed into the processes $\mE_{i-1}$ and $\mE_i$, respectively,
by inserting $i-1$ instances of $\Lambda_0$ before the channel and
$N-i$ instances of $\Lambda_1$ after it.
This transformation allows the discrimination between $\Lambda_0$ and $\Lambda_1$
to be reinterpreted as that between $\mE_{i-1}$ and $\mE_i$.
Therefore, for any fixed discrimination strategy in the change point problem,
when the probabilities of correctly identifying each process $\mE_0,\dots,\mE_N$ are
denoted by $p_0,\dots,p_N$, we have
\begin{alignat}{1}
 p_{i-1} \le f(p_i), \quad \forall i \in \{1,\dots,N\}.
 \label{eq:p_constraint}
\end{alignat}
Indeed, when $\Lambda_0$ and $\Lambda_1$ are replaced by $\mE_{i-1}$ and $\mE_i$,
respectively, a correct identification of $\mE_{i-1}$ implies
a correct identification of $\Lambda_1$, and a correct identification of $\mE_i$
implies a correct identification of $\Lambda_0$.
Since the success probability of identifying $\Lambda_0$ is $p_i$, the success probability
$p_{i-1}$ of identifying $\Lambda_1$ must not exceed $f(p_i)$.

Now, consider the problem of maximizing the average $\sum_{i=0}^N p_i / (N+1)$
under the constraint that $p_0,\dots,p_N$ satisfy Eq.~\eqref{eq:p_constraint}.
This optimization problem is formulated as follows:
\begin{alignat}{1}
 \begin{array}{ll}
  \text{maximize} & \displaystyle \frac{1}{N+1} \sum_{i=0}^N p_i \\
  \text{subject~to} & \displaystyle p_N \in [0,\op], ~ p_{i-1} \in [0,f(p_i)]
   ~(\forall i \in \{1,\dots,N\}), \\
 \end{array}
 \label{prob:p0}
\end{alignat}
where $[a,b]$ denotes the set of all real numbers between $a$ and $b$, inclusive.
Let $\uP{N}$ denote the optimal value of this problem.
Any discrimination strategy in the change point problem yields a success probability
vector $(p_0,\dots,p_N)$ that satisfies the constraints of this optimization problem,
and thus $P^{(N)} \le \uP{N}$ holds.
Let $f^{(i)}(p)$ be the result of applying $f$ iteratively $i$ times to $p$,
with $f^{(0)}(p) = p$.
Then, the following theorem holds (see Appendix~\ref{sec:R} for the proof):
\begin{thm} \label{thm:R}
 The following equation holds:
 \begin{alignat}{1}
  \uP{N} &= \max_{p \in [0,\op]} \frac{1}{N+1} \sum_{i=0}^N f^{(i)}(p).
  \label{eq:uPN}
 \end{alignat}
\end{thm}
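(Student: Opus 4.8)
The plan is to analyze the optimization problem \eqref{prob:p0} by working backwards from the last variable $p_N$. Observe that the constraints form a chain: $p_N$ is free in $[0,\op]$, and each earlier $p_{i-1}$ is constrained by $p_{i-1}\in[0,f(p_i)]$. The key structural fact I would establish first is that $f$ is nondecreasing: since $f(p)$ is the maximum probability of correctly identifying $\Lambda_1$ subject to correctly identifying $\Lambda_0$ with probability \emph{exactly} $p$ (or at least $p$), relaxing the constraint on $p$ can only enlarge the feasible set of strategies, so $f$ is monotone nondecreasing on $[0,\op]$. Given monotonicity, for any fixed value of $p_N = p$, the sum $\sum_{i=0}^N p_i$ is maximized by pushing every $p_i$ as large as its constraint allows, i.e. setting $p_{N-1} = f(p_N) = f(p)$, then $p_{N-2} = f(p_{N-1}) = f^{(2)}(p)$, and in general $p_{N-i} = f^{(i)}(p)$. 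Hence for fixed $p_N=p$ the optimal objective is $\frac{1}{N+1}\sum_{i=0}^N f^{(i)}(p)$, and optimizing over the single remaining free parameter $p\in[0,\op]$ yields \eqref{eq:uPN}.

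More carefully, I would argue the ``greedy from the top'' step by a downward induction on the index. Fix $p_N = p$. Claim: among all feasible $(p_0,\dots,p_{N-1})$ completing this choice, the assignment $p_{N-j} = f^{(j)}(p)$ is both feasible and coordinatewise maximal. Feasibility is immediate from the definition of $f^{(j)}$ and the constraint $p_{N-j}\in[0,f(p_{N-j+1})]$. Maximality: suppose $p_{N-1} \le f(p_N) = f(p)$ with equality achievable; then $p_{N-2} \le f(p_{N-1}) \le f(f(p)) = f^{(2)}(p)$ using monotonicity of $f$, and so on down the chain. Since every $p_i$ is simultaneously bounded above by $f^{(N-i)}(p)$ and this bound is attained by the greedy assignment, the greedy assignment maximizes each term of the sum, hence the sum. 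This reduces \eqref{prob:p0} to the one-dimensional maximization on the right-hand side of \eqref{eq:uPN}, and since that maximization is over a compact set $[0,\op]$ of a function that is at least upper semicontinuous (indeed $f$, being a value function of a semidefinite program, is concave hence continuous on the interior, and the relevant behavior at endpoints can be handled directly), the maximum is attained.

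The main obstacle I anticipate is not the combinatorial reduction, which is routine once monotonicity is in hand, but rather pinning down the precise regularity and definition of $f$ needed to make the argument airtight — in particular whether $f(p)$ is defined with the constraint ``$\Lambda_0$ identified with probability exactly $p$'' or ``at least $p$,'' and ensuring $f$ is genuinely nondecreasing (not merely that the relaxed version is). If $f$ were defined with an equality constraint it need not be monotone a priori, so I would first replace $f$ by its nondecreasing envelope $\tilde f(p) = \max_{q\le p} f(q)$, note that \eqref{eq:p_constraint} holds with $\tilde f$ in place of $f$ as well (since $p_{i-1}\le f(p_i)\le \tilde f(p_i)$), verify that the optimal value $\uP{N}$ is unchanged, and then observe that for the optimizing $p$ one has $f^{(i)}(p) = \tilde f^{(i)}(p)$ along the optimal trajectory — or more simply, just carry the whole argument with $\tilde f$ and note at the end that the stated formula with $f$ agrees because the maximizing $p$ can be taken at a point where $f = \tilde f$. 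This bookkeeping is the delicate part; the rest is the backward-greedy observation above.
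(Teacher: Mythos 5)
There is a genuine and fatal gap: your argument rests on the claim that $f$ is monotonically \emph{nondecreasing}, but the opposite is true. By definition $f(p)$ is the best achievable probability of correctly identifying $\Lambda_1$ subject to identifying $\Lambda_0$ with probability at least $p$; increasing $p$ \emph{shrinks} the set of admissible strategies, so $f$ is monotonically nonincreasing (and concave), as the paper states explicitly at the start of its Appendix on this theorem. This is exactly the tradeoff one expects in unambiguous discrimination: demanding more success on $\Lambda_0$ costs success on $\Lambda_1$. With the correct monotonicity, your backward-greedy step collapses. From $p_{N-1}\le f(p_N)$ and $f$ nonincreasing you get $f(p_{N-1})\ge f(f(p_N))$, not $\le$, so setting $p_{N-1}=f(p_N)$ as large as possible \emph{tightens} the bound on $p_{N-2}$ rather than loosening it. The chain assignment $p_{N-i}=f^{(i)}(p)$ is feasible but is not coordinatewise maximal, and there is a real tension between making $p_{i}$ large and leaving room for $p_{i-1}$; your reduction of Problem~\eqref{prob:p0} to the one-dimensional maximization in Eq.~\eqref{eq:uPN} is therefore unproved. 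The envelope/regularity bookkeeping you flag as the delicate part is a side issue; the combinatorial reduction you call routine is precisely the hard content of the theorem.

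For comparison, the paper proves the statement by dynamic programming: it introduces the value functions $Q'_n(p)$ (optimal partial sum with $p_n=p$ fixed) and $Q_n(p)=\max_{q\in[0,p]}Q'_n(q)$, notes the recursion $Q'_n(p)=p+Q_{n-1}[f(p)]$ and the concavity of $Q'_n$, and compares these with the chain value $R_n(p)=\sum_{i=0}^n f^{(i)}(p)$, its maximizer $r_n$, and auxiliary thresholds $t_n=f^{(-1)}(r_{n-1})$. Several lemmas about the interplay of $f$ with the maximizers (e.g.\ $r_n\ge r_1$ and $f^{(2)}(r_n)\le r_{n-2}$) feed an induction showing $Q'_n=R_n$ on $[t_n,r_n]$ and $Q'_n\le R^\opt_n$ outside it, which yields $Q_N(\op)=R^\opt_N$ and hence Eq.~\eqref{eq:uPN}. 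If you want to salvage your approach, you must replace the coordinatewise-maximality argument with something of this kind that genuinely handles the antitone coupling between consecutive variables.
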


This theorem implies that, once the map $f \colon [0,\op] \to [0,\op]$ is determined,
$\uP{N}$ is readily computable, at least from a computational standpoint.
Moreover, the theorem is highly versatile.
For example, if $\Lambda_0$ and $\Lambda_1$ represent states, then the result also applies to
the change point problem for states.
Furthermore, the inequality $P^{(N)} \le \uP{N}$ and this theorem hold not only for
unambiguous strategies but also for minimum-error strategies.

In particular, consider the case in which the map $f$ is an involution,
meaning that $f$ is invertible and its inverse is $f$ itself.
This corresponds to the situation in which the channels $\Lambda_0$ and $\Lambda_1$ are
symmetric.
In this case, from $f^{(i+2)} = f^{(i)}$, we have
\begin{alignat}{1}
 \uP{N} &= \oP_{\xi_N} \coloneqq \max_{p \in [0,\op]} [\xi_N p + (1-\xi_N) f(p)], \nonumber \\
 \xi_N &\coloneqq \frac{\gauss_sym{N/2}+1}{N+1},
 \label{eq:uPN_sym}
\end{alignat}
where $\gauss_sym{x}$ denotes the greatest integer less than or equal to $x$.
Therefore, $\uP{N}$ is equal to the maximum average success probability of discriminating
between the two channels $\Lambda_0$ and $\Lambda_1$ with prior probabilities
$\xi_N$ and $1-\xi_N$, respectively.
Now, consider the ordering of $\uP{1},\uP{2},\uP{3},\dots$.
It can be shown that $\oP_q$ is monotonically nondecreasing for $q \ge 1/2$%
\footnote{Proof: For $q \ge 1/2$, there exists $p \in [0,\op]$ satisfying
$\oP_q = q p + (1-q) f(p)$ and $p \ge f(p)$.
Therefore, for any positive real number $\delta$, we have
$\oP_{q+\delta} - \oP_q \ge \delta [p - f(p)] \ge 0$.}.
Moreover, $\xi_N = 1/2$ holds if $N$ is odd, and $\xi_N = (N+2)/(2N+2) > 1/2$ holds
otherwise.
Thus, we obtain
\begin{alignat}{1}
 \uP{1} &= \uP{3} = \cdots = \oP_{1/2}, \nonumber \\
 \uP{2} &\ge \uP{4} \ge \cdots \ge \oP_{1/2}.
 \label{eq:uP_oscillate}
\end{alignat}

\section{Unambiguous strategies for unitary channel sequences} \label{sec:unitary}

We here consider unambiguous discrimination in the specific case in which
$\Lambda_0$ and $\Lambda_1$ are distinct unitary channels.
Let $U_0$ and $U_1$ denote the unitary matrices representing them, respectively.
Then, since $f$ is an involution, Eq.~\eqref{eq:uPN_sym} holds,
which implies that $\uP{N}$ is equal to $\oP_{\xi_N}$, i.e., the maximum average success probability of
discriminating between the two channels $\Lambda_0$ and $\Lambda_1$
with prior probabilities $\xi_N$ and $1 - \xi_N$, respectively.
Let $t$ denote the minimum value of $|\braket{\varphi|U_0^\dagger U_1|\varphi}|$
with respect to all possible choices of the state $\ket{\varphi}$,
where $X^\dagger$ denotes the conjugate transpose of $X$.
$t$ and $1 - t$ are, respectively, equal to the minimum inconclusive probability
and the maximum average success probability when discriminating between the two channels
$\Lambda_0$ and $\Lambda_1$ with equal prior probabilities
using an unambiguous strategy \cite{Zim-Sed-2010}.
Let
\begin{alignat}{1}
 c &\coloneqq
 \begin{dcases}
  1, & N ~\text{is~odd}, \\
  \max \left\{ t,\sqrt{N/(N+2)} \right\}, & N ~\text{is~even};
 \end{dcases}
 \label{eq:c}
\end{alignat}
then, $p^\opt \coloneqq 1 - tc$ is the optimal solution to the problem in
Eq.~\eqref{eq:uPN_sym} and $f(p^\opt) = 1 - tc^{-1}$ holds \cite{Zim-Sed-2010}.
Substituting these into Eq.~\eqref{eq:uPN_sym}, we obtain
\begin{alignat}{1}
 \uP{N} &= \xi_N (1 - tc) + (1 - \xi_N) (1 - tc^{-1}) \nonumber \\
 &= 1 - t \cdot \frac{(N+2)c + Nc^{-1}}{2(N+1)}.
 \label{eq:uP_uni_unamb}
\end{alignat}
Let $\Gamma$ denote the polygon on the complex plane spanned by the eigenvalues of $U_0^\dagger U_1$.
Then, $t$ is equal to the distance from the origin to $\Gamma$ \cite{Dar-Pre-Par-2001}.
Specifically, let $\lambda_0$ and $\lambda_1$ be the eigenvalues corresponding to
the endpoints of the edge (i.e., line segment between two adjacent eigenvalues)
of $\Gamma$ that is closest to the origin,
and let $\ket{\lambda_0}$ and $\ket{\lambda_1}$ be the corresponding eigenvectors
of $U_0^\dagger U_1$.
Then, the minimum value of $|\braket{\varphi|U_0^\dagger U_1|\varphi}|$ is attained
when $\ket{\varphi}$ is equal to
\begin{alignat}{1}
 \ket{+} \coloneqq \frac{\ket{\lambda_0} + \ket{\lambda_1}}{\sqrt{2}},
 \label{eq:ket_plus}
\end{alignat}
and the minimum value is given by $t = |\lambda_0 + \lambda_1| / 2$.

In the computation of the upper bound $\uP{N}$, only the relation $p_{i-1} \le f(p_i)$
between the success probabilities of adjacent processes $\mE_{i-1}$ and $\mE_i$ is considered,
and no other constraints are taken into account.
Nevertheless, when $\Lambda_0$ and $\Lambda_1$ are unitary channels,
$\uP{N}$ is equal to the maximum average success probability $P^{(N)}$, as shown in the following theorem.
\begin{thm} \label{thm:unamb_opt}
 In the change point problem under unambiguous discrimination of unitary channels,
 $P^{(N)} = \uP{N}$ holds.
\end{thm}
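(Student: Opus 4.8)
Since $P^{(N)}\le\uP{N}$ is already established in Section~\ref{sec:UB}, it suffices to exhibit a zero-error tester that attains $\uP{N}$. By Eqs.~\eqref{eq:uPN_sym}--\eqref{eq:uP_uni_unamb} together with \cite{Zim-Sed-2010}, $\uP{N}=\frac1{N+1}\sum_{k=0}^N p^\opt_k$, where $p^\opt_k:=1-tc$ when $N-k$ is even and $p^\opt_k:=1-tc^{-1}$ when $N-k$ is odd, with $c$ as in Eq.~\eqref{eq:c}; these numbers saturate the chain $p^\opt_{i-1}=f(p^\opt_i)$ and $p^\opt_N\le\op$. Hence the goal is a tester whose success probabilities for $\mE_0,\dots,\mE_N$ equal exactly $(p^\opt_0,\dots,p^\opt_N)$ (any strategy attaining these automatically attains the optimum, by the analysis of Eq.~\eqref{prob:p0}).

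The first step is to reduce to a qubit problem. Since $U_0$ is known, apply $U_0^\dagger$ immediately after each channel; the $n$-th effective channel then becomes $\ident$ (if the original was $\Lambda_0$) or $V:=U_0^\dagger U_1$ (if it was $\Lambda_1$), and the change-point structure is preserved. Restrict every probe to the two-dimensional $V$-invariant subspace $\spn\{\ket{\lambda_0},\ket{\lambda_1}\}$, on which $V$ acts as $D:=\mathrm{diag}(\lambda_0,\lambda_1)$ with $t=|\braket{+|D|+}|=|\lambda_0+\lambda_1|/2$ and $\ket{+}$ from Eq.~\eqref{eq:ket_plus}. It then suffices to handle the change-point problem over the qubit unitaries $\{\ident,D\}$.

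My first try would be a \emph{parallel} tester: a fixed (generally entangled, possibly ancilla-assisted) probe state $\ket{\Psi}$ on the $N$ cells, the $n$-th cell going through the $n$-th channel, followed by an unambiguous measurement. After the corrections the output under $\mE_k$ is $\ket{\Psi_k}:=(\ident^{\ot k}\ot D^{\ot(N-k)})\ket{\Psi}$, so one must unambiguously discriminate $\{\ket{\Psi_k}\}_{k=0}^N$; when these are linearly independent, the attainable success-probability vectors are exactly $\{p\ge\zero:\,G\succeq\mathrm{diag}(p)\}$, where $G$ is their Gram matrix, $G_{jk}=\braket{\Psi_j|\Psi_k}=\bra{\Psi}\bigotimes_{l=j+1}^{k}D^\dagger_l\ket{\Psi}$ for $j\le k$. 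So the task becomes: choose $\ket{\Psi}$ making the $\ket{\Psi_k}$ independent and $G\succeq\mathrm{diag}(p^\opt_0,\dots,p^\opt_N)$. The target $G$ turns out to be rigid: $G_{kk}=1$, and for \emph{any} probe $|G_{k,k+1}|=|\Tr(\rho_{k+1}D^\dagger)|\ge t$ ($\rho_{k+1}$ the reduced cell state), while $(1-p^\opt_k)(1-p^\opt_{k+1})=t^2$; positivity of the $\{k,k+1\}$ block of $G-\mathrm{diag}(p^\opt)$ then forces $|G_{k,k+1}|=t$ and makes that block singular, yielding $N$ independent null vectors and hence $G-\mathrm{diag}(p^\opt)=uu^\dagger$ with $|u_k|^2=1-p^\opt_k$. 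One therefore has to realize, for every interval $I$, the correlator $\bra{\Psi}\bigotimes_{l\in I}D^\dagger_l\ket{\Psi}$ equal to the value dictated by $uu^\dagger$ (for odd $N$, where $c=1$, simply $t$ on every interval; for even $N$, the parity-dependent values $tc^{\pm1}$ and $t$). When $c=t$ (i.e.\ even $N$ with $t\ge\sqrt{N/(N+2)}$) this is obtained cheaply by grouping the cells into $N/2$ consecutive pairs, reducing to the length-$N/2$ change-point problem over $\{\ident,D^2\}$ (with $t\mapsto|2t^2-1|$) and applying the statement inductively, with the odd-indexed probabilities set to $0$; if the required probe fails to exist in some other parameter range, I would instead use an adaptive construction that processes the channels in order and at each step peels off a binary unambiguous $\ident$-vs-$D$ decision, routing the post-measurement states so the whole procedure stays error-free.

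The principal obstacle is precisely this construction: exhibiting the probe state (or adaptive strategy) realizing the rigid Gram matrix forced above, and checking that it is a genuine state and makes the $\ket{\Psi_k}$ linearly independent — since $\bigotimes_l D^\dagger_l$ is diagonal, this reduces to finding a \emph{bona fide} probability distribution over $Z$-basis configurations with prescribed magnetization correlators on every interval. Once it is in hand, verifying $G-\mathrm{diag}(p^\opt)=uu^\dagger\succeq0$ and assembling the three cases ($N$ odd; $N$ even with $c=t$; $N$ even with $c=\sqrt{N/(N+2)}$) is routine.
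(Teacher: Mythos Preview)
Your reduction to the qubit pair $(\ident,D)$ and your identification of the target Gram matrix are both correct and match the paper exactly: the paper also shows (Eq.~\eqref{eq:GN}) that the optimal strategy realizes $G-\mathrm{diag}(1-tc_i)=t\ket{\mu}\bra{\mu}$, i.e.\ a rank-one remainder, which is precisely the structure you derived from the saturated $2\times2$ blocks. So up to this point you are on the paper's track.

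The genuine gap is that you have not carried out the construction, and the two shortcuts you sketch both have problems. First, the pairing/induction idea for $c=t$ does not preserve the unambiguous property: when the true change point $k$ is odd, one pair sees the ``mixed'' channel $\ident\otimes D$, which lies outside the set $\{\ident^{\otimes2},D^{\otimes2}\}$; an unambiguous tester for the reduced $N/2$-problem carries no zero-error guarantee on such inputs, so it may output some $m'$ (hence $k'=2m'$) and produce an error in the original problem. Second, for the parallel probe you correctly reduce the question to a moment problem for a distribution on $\{0,1\}^N$, but you do not solve it, and it is not clear that a solution exists for all $(N,t)$---certainly not without the ancilla and the argument you have left open. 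Finally, even once a strategy is in hand, linear independence of the $\ket{\Psi_k}$ can fail (it does in the paper's construction precisely when $c=t$ and $n\ge3$), so the passage from Eq.~\eqref{eq:G_nas0} to an actual POVM needs the extra care of Appendix~\ref{sec:Gdep}.

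What the paper actually does for the construction is adaptive rather than parallel: it builds the Gram matrix $\tG^{(n)}$ step by step, choosing at stage $n{+}1$ a unitary $D^{(n+1)}$ that sends $\ket{\psi^{(n)}_n}$ to $\ket{+}\otimes\ket{0}$ and a carefully chosen companion vector to $\tU^\dagger\ket{+}\otimes\ket{0}$. The only nontrivial feasibility condition is that the new Gram row $\bra{\nu'}$ satisfies $\bra{\nu'}\Xi=\bra{\nu'}$ and $\braket{\nu'|\tG^{(n)}{}^+|\nu'}\le1$ (Eq.~\eqref{eq:nu_}); the paper verifies this by an explicit computation of $\tG^{(n)}{}^+$ and a short inequality reducing to $ntc(c-t)+mt(1-c^2)+c(1-t^2)\ge0$. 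This single inductive step replaces your three-case split and your unresolved moment problem, and it is where the real work lies.
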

\begin{proof}
 We consider a discrimination strategy represented as in Fig.~\ref{fig:discrimination},
 where $k$ denotes the change point.
 $\rho$ is an input state, $\mD^{(2)}, \dots, \mD^{(N)}$ are channels,
 and $\{ \Pi_m \}_{m=0}^{N+1}$ is a measurement.
 We assume, without the loss of generality, that $\mD^{(2)}, \dots, \mD^{(N)}$ are unitary channels.
 Each $\Pi_m$ for $m \in \{0,\dots,N\}$ corresponds to answering that the change point is $m$,
 while $\Pi_{N+1}$ corresponds to answering ``I don't know''.
 \begin{figure}[bt]
  \centering
  \includegraphics[scale=1.0]{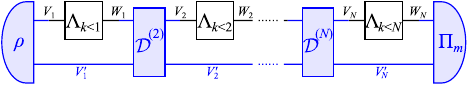}
  \caption{Discrimination in the change point problem for general quantum channels.
  Let $k$ denote the change point.  
  Any discrimination strategy is represented as a family comprising a quantum state $\rho$,
  channels $\mD^{(2)},\dots,\mD^{(N)}$, and a measurement $\{ \Pi_m \}_{m=0}^{N+1}$,
  which are represented in blue.
  All systems $\V_1,\V_2,\dots,\V_N$ have identical levels,
  and the same holds for $\W_1,\W_2,\dots,\W_N$.
  $\V'_1,\dots,\V'_N$ are auxiliary systems.}
  \label{fig:discrimination}
 \end{figure}

 Let $\ket{\psi^{(n)}_k}$ denote the state of $\W_n \ot \V'_n$ immediately after
 the $n$-th channel is applied when the change point is $k$.
 The candidates for the state just before the measurement are
 $\{ \ket{\psi^{(N)}_k} \}_{k=0}^N$, and an optimal measurement is performed on
 these candidates.
 The average success probability in this case is determined by the Gram matrix $G$
 of $\{ \ket{\psi^{(N)}_k} \}_{k=0}^N$, where the $(i+1,j+1)$-th entry of $G$ is given by
 $G_{i,j} \coloneqq \braket{\psi^{(N)}_i|\psi^{(N)}_j}$.
 The problem of finding the optimal measurement $\{ \Pi_m \}_{m=0}^{N+1}$
 for a given Gram matrix $G$ can be treated as an unambiguous discrimination problem
 for pure states, which is relatively easy to analyze.
 For simplicity, we assume that $\ket{\psi^{(N)}_0}, \dots, \ket{\psi^{(N)}_N}$
 are linearly independent (i.e., $G$ is nonsingular).
 In this case, there exist states $\ket{\tpsi_0}, \dots, \ket{\tpsi_N}$ satisfying
 $\braket{\tpsi_i|\psi^{(N)}_j} = \delta_{i,j}$, and each $\Pi_m$ with $m \in \{0,\dots,N\}$
 can be expressed as $\Pi_m = x_m \ket{\tpsi_m} \bra{\tpsi_m}$
 using a nonnegative real number $x_m$.
 The success probability when the change point is $k$ is then given by
 $\braket{\psi^{(N)}_k|\Pi_k|\psi^{(N)}_k} = x_k$.
 A necessary and sufficient condition for the existence of such a measurement is that
 $\Pi_{N+1} = \I - \sum_{m=0}^N \Pi_m$ is positive semidefinite, i.e.,
 \begin{alignat}{1}
  \I - \sum_{m=0}^N x_m \ket{\tpsi_m} \bra{\tpsi_m} &\ge \zero
  \label{eq:I_nas}
 \end{alignat}
 holds, where we write $X \ge Y$ to indicate that $X - Y$ is positive semidefinite,
 for any two matrices $X$ and $Y$.
 This is equivalent to requiring that the matrix whose $(i+1,j+1)$-th entry is
 \begin{alignat}{1}
  \bra{\psi^{(N)}_i} \left( \I - \sum_{m=0}^N x_m \ket{\tpsi_m} \bra{\tpsi_m} \right)
  \ket{\psi^{(N)}_j}
  &= G_{i,j} - x_i \delta_{i,j}
 \end{alignat}
 is positive semidefinite, i.e.,
 \begin{alignat}{1}
  G &\ge \sum_{i=0}^N x_i \ket{i} \bra{i}
  \label{eq:G_nas0}
 \end{alignat}
 holds.
 Appendix~\ref{sec:Gdep} demonstrates that it is sufficient for Eq.~\eqref{eq:G_nas0} to hold
 even when $\ket{\psi^{(N)}_0}, \dots, \ket{\psi^{(N)}_N}$ are linearly dependent.

 Now, consider the case in which a discrimination strategy exists such that
 the success probabilities for even and odd change points are $1 - tc$ and $1 - tc^{-1}$,
 respectively.
 Let
 \begin{alignat}{1}
  c_i &\coloneqq
  \begin{dcases}
   c, & i \equiv_2 0, \\
   c^{-1}, & i \equiv_2 1,
  \end{dcases}
  \label{eq:ci}
 \end{alignat}
 where $a \equiv_2 b$ means that $a - b$ is divisible by 2; then,
 from the above discussion, if Eq.~\eqref{eq:G_nas0} with $x_i = 1 - tc_i$, i.e.,
 \begin{alignat}{1}
  G \ge \sum_{i=0}^N (1 - tc_i) \ket{i}\bra{i},
  \label{eq:G_nas}
 \end{alignat}
 holds, then such a discrimination strategy exists.
 Moreover, the average success probability using this strategy is
 \begin{alignat}{1}
  \frac{1}{N+1} \sum_{l=0}^N (1 - t c_l)
  &= 1 - t \cdot \frac{(N+2)c + Nc^{-1}}{2(N+1)} = \uP{N}.
 \end{alignat}
 Therefore, it suffices to show that there exist $\rho$ and $\mD^{(2)},\dots,\mD^{(N)}$
 such that the Gram matrix $G$ satisfies Eq.~\eqref{eq:G_nas}.

 To achieve the average success probability $\uP{N}$, it is desirable to discriminate between
 the two processes $\mE_{n-1}$ and $\mE_n$ as accurately as possible
 for each $n \in \{ 1,\dots,N \}$.
 The only difference between $\mE_{n-1}$ and $\mE_n$ is whether the $n$-th channel
 is $\Lambda_0$ or $\Lambda_1$.
 Thus, to discriminate accurately between the cases in which the change point $k$ is $n - 1$ or $n$,
 we need to choose a state $\ket{\tvarphi_n}$ of the system $\V_n \ot \V'_n$
 such that the absolute value of the inner product between
 $(U_0 \ot \I) \ket{\tvarphi_n}$ and $(U_1 \ot \I) \ket{\tvarphi_n}$ is minimized.
 As discussed previously, $\ket{\tvarphi_n}$ may be chosen as the product state
 $\ket{+} \ot \ket{0}$.
 Let
 \begin{alignat}{1}
  u &\coloneqq \frac{\braket{+|U_0^\dagger U_1|+}}{t};
  \label{eq:u}
 \end{alignat}
 then, we have $\braket{\psi^{(n)}_n|\psi^{(n)}_{n-1}} = tu$.
 Note that when the change point $k$ is neither $n - 1$ nor $n$,
 the state of the system $\V_n \ot \V'_n$ at the $n$-th step does not need to be
 a product state.
 Also, since the inner product is preserved before and after applying the unitary channel $\mD^{(n)}$,
 and $\Lambda_1$ is applied at the $n$-th step when $k < n$,
 $\braket{\psi^{(n)}_i|\psi^{(n)}_j} = \braket{\psi^{(n-1)}_i|\psi^{(n-1)}_j}$
 holds for any $n \ge 2$ and $i, j < n$.

 We now consider the optimization of $\rho$ and $\mD^{(2)},\dots,\mD^{(N)}$ under these conditions.
 While the optimization is nontrivial, a heuristic solution can be found.
 As detailed in Appendix~\ref{sec:unamb_opt}, there exist $\rho$ and $\mD^{(2)},\dots,\mD^{(N)}$
 such that
 \begin{alignat}{1}
  G &= \sum_{i=0}^N (1 - tc_i) \ket{i}\bra{i} + t \ket{\mu} \bra{\mu},
  \label{eq:GN} \\
  \ket{\mu} &\coloneqq \sum_{i=0}^N u^i \sqrt{c_i} \ket{i}.
 \end{alignat}
 Since $t \ket{\mu} \bra{\mu}$ is positive semidefinite, the Gram matrix $G$ satisfies
 Eq.~\eqref{eq:G_nas}.
\end{proof}

\section{Lower bound on the maximum average success probability} \label{sec:LowerBound}

A lower bound on $P^{(N)}$ can be obtained by restricting to a specific class of
unambiguous discrimination strategies and evaluating the maximum average success probability
within that class.
Here, to enable efficient optimization, we consider a class of local discrimination strategies.
In Ref.~\cite{Sen-Cal-Mun-2017}, nonadaptive local discrimination was studied
in the context of detecting change points in quantum states.
In contrast, we consider adaptive local discrimination based on Bayesian updating.
In this strategy, an initial state $\rho_1$ is prepared and input into
the first-step channel $\Lambda_{k < 1}$, followed by a measurement $\{ \Pi^{(1)}_m \}_m$
on the output (see Fig.~\ref{fig:discrimination_Bayes}).
Next, a state $\rho_2$ is prepared and input into the second-step channel $\Lambda_{k < 2}$,
followed by a measurement $\{ \Pi^{(2)}_m \}_m$.
This process continues similarly.
For each step $n \ge 2$, the state $\rho_n$ and the measurement $\{ \Pi^{(n)}_m \}_m$ are
allowed to adapt based on the outcome of the previous measurement $\{ \Pi^{(n-1)}_m \}_m$.
Each measurement $\{ \Pi^{(n)}_m \}_m$ returns one of three outcomes: 0, 1, or \question,
indicating an inconclusive result.
To ensure error-free discrimination, each measurement $\{ \Pi^{(n)}_m \}_m$ must satisfy
the unambiguous condition, meaning that if $k < n$ (resp. $k \ge n$) holds,
then the outcome must not be 1 (resp. 0).
The maximum average success probability under this restricted class of discrimination strategies,
denoted by $\lP{N}$, can be obtained by solving a simple optimization problem
(see Appendix~\ref{sec:Bayes}).
The value $\lP{N}$ provides a lower bound on $P^{(N)}$.
Note that such local discrimination strategies offer the advantage of detecting
the change point immediately, unlike global discrimination strategies.
\begin{figure}[bt]
 \centering
 \includegraphics[scale=1.0]{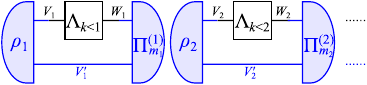}
 \caption{Adaptive local discrimination based on Bayesian updating. Each measurement $\{ \Pi^{(n)}_m \}_m$ is unambiguous, and both the state $\rho_n$ $(n \ge 2)$ and the measurement $\{ \Pi^{(n)}_m \}_m$ are adaptively updated based on the outcome of the previous measurement $\{ \Pi^{(n-1)}_m \}_m$.}
 \label{fig:discrimination_Bayes}
\end{figure}

Since the computational cost of numerically determining the maximum average success probability $P^{(N)}$
for general channels scales exponentially with $N$,
it becomes infeasible to compute $P^{(N)}$ unless $N$ is sufficiently small.
In contrast, both $\lP{N}$ and $\uP{N}$ can be computed with significantly lower complexity,
typically $O(N)$.
If the difference between $\lP{N}$ and $\uP{N}$ is sufficiently small, then
these bounds can serve as accurate approximations of $P^{(N)}$.

\section{Numerical experiments}

Figure~\ref{fig:result} presents the evaluated results of the maximum average success probability $P^{(N)}$,
its upper bound $\uP{N}$, and lower bound $\lP{N}$.
Panels~(a) and (b) illustrate the scenario in which both $\Lambda_0$ and $\Lambda_1$ are
unitary channels, while panels~(c) and (d) correspond to the case in which $\Lambda_0$ is
the identity channel and $\Lambda_1$ is an amplitude damping channel.
In the former case, as shown in Theorem~\ref{thm:unamb_opt}, $\uP{N}$ equals $P^{(N)}$.
In the latter case, $\uP{N}$ is generally greater than $P^{(N)}$, but the difference
between $\uP{N}$ and $\lP{N}$ was at most approximately $0.05$ when varying
the damping parameter in the range $1 \le N \le 10$.
In panel~(b), $P^{(N)}$ oscillates depending on whether $N$ is even or odd,
as shown in Eq.~\eqref{eq:uP_oscillate}.
According to Eq.~\eqref{eq:uP_uni_unamb}, when $N$ is odd, $P^{(N)} = 1 - t$ holds, and
$P^{(N)}$ converges to $1 - t$ in the limit as $N \to \infty$.
Furthermore, additional numerical experiments were conducted for other channels,
confirming that the upper and lower bounds yield moderate results.
\begin{figure}[bt]
 \centering
 \includegraphics[scale=1.0]{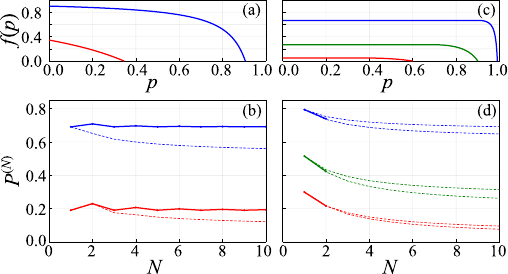}
 \caption{(a)(c) Map $f$, and (b)(d) the maximum average success probability $P^{(N)}$ (solid lines),
 along with its upper bound $\uP{N}$ and lower bound $\lP{N}$ (dashed lines).
 Panels~(a) and (b) correspond to the case in which both $\Lambda_0$ and $\Lambda_1$ are
 unitary channels, with $\lambda_1 / \lambda_0$ equal to $\exp(0.4 \pi\sqrt{-1})$ (red)
 and $\exp(0.8 \pi\sqrt{-1})$ (blue), respectively.
 The value $t = |\lambda_1 / \lambda_0 + 1| / 2$ is approximately $0.809$ and $0.309$,
 respectively.
 Since $f$ is an involution, Eq.~\eqref{eq:uP_oscillate} holds.
 Panels~(c) and (d) correspond to the case in which $\Lambda_0$ is the identity channel
 and $\Lambda_1$ is the amplitude damping channel, with damping parameters of $0.6$ (red),
 $0.9$ (green), and $0.99$ (blue).
 In panel~(d), due to computational limitations, $P^{(N)}$ was only calculated for $N = 1$
 and $N = 2$, and thus only these values are plotted.}
 \label{fig:result}
\end{figure}

\section{Conclusion}

This paper has presented a method for deriving upper and lower bounds on
the maximum average success probability in the unambiguous discrimination of change points
involving two arbitrary quantum channels.
The upper bound is derived solely from the distinguishability constraints between the two channels
and is given as the solution to the optimization problem in Eq.~\eqref{eq:uPN}.
In particular, when both channels are unitary, we have shown that this upper bound is equal to
the maximum average success probability, and its value can be analytically expressed
in terms of the channel sequence length $N$ and the distinguishability limit $1 - t$
of the two channels.
The lower bound is obtained by considering adaptive local discrimination
based on Bayesian updating.
Numerical experiments confirm that the gap between the upper and lower bounds
remains moderate.

\section*{Acknowledgment}

We thank for O.~Hirota, M.~Sohma, T.~S.~Usuda, and K.~Kato for insightful discussions.
This work was supported by the Air Force Office of Scientific Research under
award number FA2386-22-1-4056.

\appendix

\section{Notation}

Let $\Natural$ be the set of all nonnegative integers.
$\I_\V$ denotes the identity matrix on $\V$.
Let $\Pos(\V,\W)$ be the set of all completely positive maps,
which we call single-step processes, from a system $\V$ to a system $\W$.

A single-step process $f \in \Pos(\V,\W)$ is depicted as
\begin{alignat}{1}
 \includegraphics[scale=1.0]{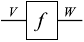} ~\raisebox{.1em}{,}
 \label{eq:single_step_process_pdf}
\end{alignat}
where labeled wires denote systems.
A one-level system is represented by the absence of a wire in the diagram.
The sequential or parallel composition of two or more single-step processes is allowed.
We consider the concatenation of $N$ single-step processes, denoted by an $N$-step process,
$\{ \mE^{(t)} \in \Pos(\W'_{t-1} \ot \V_t, \W'_t \ot \W_t) \}_{t=1}^N$ such as
\begin{alignat}{1}
 \includegraphics[scale=1.0]{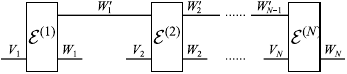} ~\raisebox{.1em}{,}
 \label{eq:comb_pdf}
\end{alignat}
where $W'_0$ and $W'_N$ are one-level.
This $N$-step process is denoted by $\mE^{(N)} \ast \cdots \ast \mE^{(1)}$,
where $\ast$ denotes the concatenation.
We call $\mE^{(N)} \ast \cdots \ast \mE^{(1)}$ a quantum comb \cite{Chi-Dar-Per-2008}
if $\mE^{(1)},\dots,\mE^{(N)}$ are quantum channels.

The change point problem addressed in this paper is the problem
of discriminating $N+1$ quantum combs $\mE_0,\dots,\mE_N$,
where $\mE_k \coloneqq \Lambda_{k < N} \ast \cdots \ast \Lambda_{k < 1}$.
$\Lambda_{k < n}$ is a channel $\Lambda_0$ if $k < n$ holds,
and a channel $\Lambda_1$ otherwise.
Each $\mE_k$ is depicted by
\begin{alignat}{1}
 \includegraphics[scale=1.0]{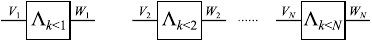} ~\raisebox{.1em}{.}
 \label{eq:comb_channel_pdf}
\end{alignat}
We consider the concatenation of a $(N+1)$-step process
$\mD_m \coloneqq \Pi_m \ast \mD^{(N)} \ast \cdots \ast \mD^{(1)}$
represented by
\begin{alignat}{1}
 \includegraphics[scale=1.0]{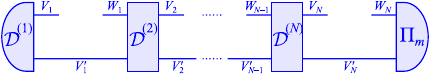}
 \label{eq:comb2_pdf}
\end{alignat}
with a quantum comb $\mE_k$ of Eq.~\eqref{eq:comb_channel_pdf}.
We refer to $\{ \mD_m \}_{m=0}^M$ as a tester if $\mD^{(1)}$ is a state,
$\mD^{(2)},\dots,\mD^{(N)}$ are quantum channels,
and $\{ \Pi_m \}_{m=0}^M$ is a quantum measurement.
$M$ is equal to $N$ for minimum-error strategies and $N + 1$ for unambiguous strategies.
The process represented by Fig.~\ref{fig:discrimination}, where $\rho = \mD^{(1)}$,
is the concatenation of $\mD_m$ of Eq.~\eqref{eq:comb2_pdf} with $\mE_k$.

\section{Proof of Theorem~\ref{thm:R}} \label{sec:R}

\subsection{Preliminaries}

The map $f \colon [0,\op] \to [0,\op]$ in Sec.~\ref{sec:UB} is
nonnegative, monotonically nonincreasing, concave, and continuous.
For each $n \in \Natural$ and $p \in [0,\op]$, let $Q'_n(p)$ be the optimal value
of the following optimization problem:
\begin{alignat}{1}
 \begin{array}{ll}
  \text{maximize} & \displaystyle \sum_{i=0}^n p_i \\
  \text{subject~to} & \displaystyle p_n = p, ~ p_{i-1} \in [0,f(p_i)]
   ~(\forall i \in \{1,\dots,n\}). \\
 \end{array}
 \label{prob:Q_}
\end{alignat}
Also, let
\begin{alignat}{1}
 Q_n(p) &\coloneqq \max_{q \in [0,p]} Q'_n(q), \label{eq:Qn_max} \\
 R_n(p) &\coloneqq \sum_{i=0}^n f^{(i)}(p), \nonumber \\
 r_n &\coloneqq \max \mathop{\argmax}_{p \in [0,\op]} R_n(p), \nonumber \\
 R^\opt_n &\coloneqq \max_{p \in [0,p]} R_n(p) = R_n(r_n);
\end{alignat}
then, from $\uP{N} = Q_n(\op) / (N+1)$,
Theorem~\ref{thm:R} is equivalent to $Q_N(\op) = R^\opt_N$.
From the definition of $Q'_n(p)$, we have, for any $n \in \Natural$ with $n \ge 1$ and
$p \in [0,\op]$,
\begin{alignat}{1}
 Q'_n(p) &= p + Q_{n-1}[f(p)].
 \label{eq:Qn_pQ}
\end{alignat}

For each $k \in \Natural$ and $y \in [0,\op]$, let
\begin{alignat}{1}
 \cP_{k,y} \coloneqq \left\{ p \relmid f^{(k)}(p) = y \right\}.
\end{alignat}
Let
\begin{alignat}{1}
 f^{(-k)}(y) &\coloneqq
 \begin{dcases}
  \max \cP_{k,y}, & k ~\text{is~even}, \\
  \min \cP_{k,y}, & k ~\text{is~odd} \\
 \end{dcases}
\end{alignat}
if $\cP_{k,y}$ is nonempty.
Note that if $\cP_{k,y}$ is nonempty, then $f^{(k)}[f^{(-k)}(y)] = y$ holds.
Also, let
\begin{alignat}{1}
 t_n &\coloneqq
 \begin{dcases}
  f^{(-1)}(r_{n-1}), & \text{$\cP_{1,r_{n-1}}$ is nonempty}, \\
  0, & \text{otherwise} \\
 \end{dcases}
 \label{eq:tn}
\end{alignat}
for $1 \le n \in \Natural$ and $t_0 \coloneqq 0$.
$\cP_{1,y}$ being nonempty is equivalent to $f(\op) \le y \le f(0)$,
and thus is equivalent to the existence of $p,q \in [0,\op]$ satisfying $f(p) \le y \le f(q)$.

The feasible set of Problem~\eqref{prob:Q_} is convex.
Indeed, for any two feasible solutions $\{ p_i \}_{i=0}^n$ and $\{ p'_i \}_{i=0}^n$
and any real number $c$ with $0 \le c \le 1$,
$\{ q_i \coloneqq c p_i + (1-c) p'_i \}_{i=0}^n$ satisfies
$q_n = p$ and $q_{i-1} \in [0, c f(p_i) + (1-c) f(p'_i)] \subseteq [0, f(q_i)]$
for each $i \in \{0,\dots,n\}$.
Thus, $Q'_n$ is concave over the interval $[0,\op]$.
Also, $Q'_0(p) = R_0(p) = p$, $r_0 = \op$, and $Q'_1(p) = R_1(p) = p + f(p)$ hold.
$Q'_1 = R_1$ is monotonically nondecreasing in $[0,r_1]$ and
monotonically nonincreasing in $[r_1,\op]$.
We have, for any $p,q \in [0,\op]$,
\begin{alignat}{3}
 p \le q &\quad\Rightarrow\quad f(p) \ge f(q),
 \label{eq:pqfpfq} \\
 f(p) > f(q) &\quad\Rightarrow\quad p < q.
 \label{eq:pqfpfq_inv}
\end{alignat}
When $\cP_{1,p}$ is nonempty, we have
\begin{alignat}{3}
 p \ge f(q) &\quad\Rightarrow\quad f^{(-1)}(p) \le q.
 \label{eq:fpq}
\end{alignat}

\subsection{Lemmas}

\begin{lemma} \label{lemma:rnfrn}
 $r_n \ge f^{(n+1)}(r_n)$ holds for each $n \in \Natural$.
\end{lemma}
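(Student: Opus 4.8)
The plan is to exploit a one-line reindexing identity for $R_n$ and then invoke the optimality of $r_n$; concavity and continuity of $f$ are not needed here.

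First I would record the shift identity
\begin{alignat}{1}
 R_n(f(p)) = R_n(p) - p + f^{(n+1)}(p), \qquad \forall p \in [0,\op].
 \label{eq:Rnshift}
\end{alignat}
This follows directly from the definition $R_n(p) = \sum_{i=0}^n f^{(i)}(p)$: replacing $p$ by $f(p)$ and using $f^{(i)}(f(p)) = f^{(i+1)}(p)$ gives $R_n(f(p)) = \sum_{i=1}^{n+1} f^{(i)}(p)$, and subtracting off $R_n(p) = \sum_{i=0}^{n} f^{(i)}(p)$ leaves exactly $f^{(n+1)}(p) - f^{(0)}(p) = f^{(n+1)}(p) - p$. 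Since $f$ maps $[0,\op]$ into $[0,\op]$, the point $f(p)$ lies in $[0,\op]$, so the left-hand side of Eq.~\eqref{eq:Rnshift} is well defined and, moreover, $f(r_n)$ is again a feasible point of the maximization defining $r_n$.

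Next I would evaluate Eq.~\eqref{eq:Rnshift} at $p = r_n$. Because $r_n$ maximizes $R_n$ over $[0,\op]$ and $f(r_n) \in [0,\op]$, we have $R_n(f(r_n)) \le R_n(r_n) = R^\opt_n$. Hence Eq.~\eqref{eq:Rnshift} with $p = r_n$ gives
\begin{alignat}{1}
 f^{(n+1)}(r_n) - r_n = R_n(f(r_n)) - R_n(r_n) \le 0,
\end{alignat}
which is precisely $r_n \ge f^{(n+1)}(r_n)$, completing the proof.

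There is essentially no hard step: the only thing to notice is the reindexing identity Eq.~\eqref{eq:Rnshift} together with the observation that applying $f$ to the maximizer $r_n$ yields another admissible argument, so optimality of $r_n$ can be applied verbatim. The argument works for \emph{any} maximizer of $R_n$, not only the largest one, and makes no use of the special structural properties of $f$ beyond its mapping $[0,\op]$ into itself.
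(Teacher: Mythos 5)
Your argument is correct and is essentially identical to the paper's proof: the shift identity $R_n(f(p)) = R_n(p) - p + f^{(n+1)}(p)$ is exactly the telescoping rewriting $r_n - f^{(n+1)}(r_n) = R_n(r_n) - R_n[f(r_n)]$ used there, followed by the same appeal to the maximality of $r_n$ (with $f(r_n) \in [0,\op]$ ensuring feasibility). No discrepancy to report.
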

\begin{proof}
 \begin{alignat}{1}
  r_n - f^{(n+1)}(r_n) &= \sum_{i=0}^n f^{(i)}(r_n) - \sum_{i=1}^{n+1} f^{(i)}(r_n)
  \nonumber \\
  &= R_n(r_n) - R_n[f(r_n)] \nonumber \\
  &\ge 0.
 \end{alignat}
\end{proof}

\begin{lemma} \label{lemma:rnfop}
 $r_n \ge f(\op)$ holds for each $n \in \Natural$.
\end{lemma}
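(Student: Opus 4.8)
The plan is to get the inequality directly out of Lemma~\ref{lemma:rnfrn}, with essentially no extra work. Lemma~\ref{lemma:rnfrn} already supplies $r_n \ge f^{(n+1)}(r_n)$, so it is enough to show $f^{(n+1)}(r_n) \ge f(\op)$ and then chain the two bounds. First I would note that $r_n \in [0,\op]$ and that $f$ maps $[0,\op]$ into itself, so $f^{(n)}(r_n) \in [0,\op]$ and $f^{(n+1)}(r_n) = f\bigl(f^{(n)}(r_n)\bigr)$ is a value of $f$ on $[0,\op]$. Since $f$ is monotonically nonincreasing on $[0,\op]$, its minimum over that interval is attained at the right endpoint, i.e.\ $f(x) \ge f(\op)$ for every $x \in [0,\op]$; applying this with $x = f^{(n)}(r_n)$ gives $f^{(n+1)}(r_n) \ge f(\op)$. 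Combining, $r_n \ge f^{(n+1)}(r_n) \ge f(\op)$, which is the claim.

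The only things to check are bookkeeping: that $n+1 \ge 1$, so that $f^{(n+1)}(r_n)$ is genuinely an application of $f$ (in particular $n=0$ is covered, where the statement is just $r_0 = \op \ge f(\op)$), and that Lemma~\ref{lemma:rnfrn} is available for the same index $n$. There is no real obstacle here — the content sits entirely inside Lemma~\ref{lemma:rnfrn}, and the remaining step is the trivial observation that every output of $f$ is bounded below by $f(\op)$. (A more laborious route — showing that $R_n$ restricted to $[0,f(\op)]$ is maximized at $f(\op)$ via the recursion $R_n(p) = p + R_{n-1}(f(p))$ and induction on $n$ — is available but is strictly harder, since it must contend with the non-concavity of $R_n$ for $n \ge 2$.)
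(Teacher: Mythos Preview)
Your proof is correct and matches the paper's approach exactly: the paper's proof is the one-line chain $r_n \ge f^{(n+1)}(r_n) = f[f^{(n)}(r_n)] \ge f(\op)$, invoking Lemma~\ref{lemma:rnfrn} for the first inequality and the monotonicity of $f$ on $[0,\op]$ for the last. Your write-up simply unpacks these two steps in more detail.
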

\begin{proof}
 $r_n \ge f^{(n+1)}(r_n) = f[f^{(n)}(r_n)] \ge f(\op)$.
\end{proof}

\begin{lemma} \label{lemma:f2r}
 For each $n \in \Natural$ with $n \ge 2$, if $r_{n-2} \ge r_1$ holds,
 then $f^{(2)}(r_n) \le r_{n-2}$ holds.
\end{lemma}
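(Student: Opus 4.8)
The plan is to argue by contradiction: assume $f^{(2)}(r_n) > r_{n-2}$ and produce a point $p$ with $R_n(p) > R_n(r_n)$, which is impossible since $r_n \in \argmax R_n$. The structural fact that makes this work is the decomposition, valid for $n \ge 2$,
\begin{alignat}{1}
 R_n(p) &= p + f(p) + \sum_{i=2}^n f^{(i)}(p) = R_1(p) + R_{n-2}\bigl(f^{(2)}(p)\bigr),
 \label{eq:Rn_split}
\end{alignat}
which separates $R_n$ into the unimodal term $R_1$ — nondecreasing on $[0,r_1]$ and nonincreasing on $[r_1,\op]$, as recorded in the preliminaries — and the term $R_{n-2}$ composed with the continuous and nondecreasing map $f^{(2)} = f \circ f$.

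Before using \eqref{eq:Rn_split} I would record two elementary facts. First, applying the maximality of $r_1$ for $R_1$ at the feasible point $f(r_1)$ gives $r_1 + f(r_1) = R_1(r_1) \ge R_1(f(r_1)) = f(r_1) + f^{(2)}(r_1)$, hence $f^{(2)}(r_1) \le r_1$. Second, since $r_{n-2}$ is by definition the \emph{largest} maximizer of $R_{n-2}$, any argument strictly exceeding $r_{n-2}$ yields a value strictly below $R^\opt_{n-2}$; in particular, under the contradiction hypothesis, $R_{n-2}\bigl(f^{(2)}(r_n)\bigr) < R^\opt_{n-2}$.

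The main step then chains these together. Using the hypothesis $r_{n-2} \ge r_1$ and the first fact, $f^{(2)}(r_1) \le r_1 \le r_{n-2} < f^{(2)}(r_n)$; since $f^{(2)}$ is nondecreasing this forces $r_1 < r_n$, and since $f^{(2)}$ is continuous the intermediate value theorem supplies some $p \in [r_1, r_n)$ with $f^{(2)}(p) = r_{n-2}$, so that $R_{n-2}\bigl(f^{(2)}(p)\bigr) = R^\opt_{n-2}$ by \eqref{eq:Rn_split}. Because $R_1$ is nonincreasing on $[r_1,\op]$ and $r_1 \le p < r_n$, we also have $R_1(p) \ge R_1(r_n)$. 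Substituting both into \eqref{eq:Rn_split} and invoking the second fact,
\begin{alignat}{1}
 R_n(p) &= R_1(p) + R^\opt_{n-2} \ge R_1(r_n) + R^\opt_{n-2} \nonumber \\
 &> R_1(r_n) + R_{n-2}\bigl(f^{(2)}(r_n)\bigr) = R_n(r_n),
\end{alignat}
contradicting $r_n \in \argmax R_n$; hence $f^{(2)}(r_n) \le r_{n-2}$.

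I expect the only delicate point to be securing the strict inequality in the last display, since it genuinely needs both ingredients used in just the right way: that $r_{n-2}$ is the \emph{maximal} maximizer of $R_{n-2}$ (so that landing strictly to its right is strictly suboptimal, not merely nonoptimal), and that $R_1$ is \emph{monotone} on $[r_1,\op]$ rather than merely bounded by its maximum there (so that moving from $r_n$ leftward to $p$ cannot decrease $R_1$). Everything else — the decomposition \eqref{eq:Rn_split}, continuity and monotonicity of $f\circ f$, and the bound $f^{(2)}(r_1) \le r_1$ — is routine and already available from the preliminaries.
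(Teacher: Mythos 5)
Your proof is correct, and it takes a noticeably more direct route than the paper's, though both share the same skeleton: argue by contradiction from $f^{(2)}(r_n) > r_{n-2}$, use the decomposition $R_n = R_1 + R_{n-2}\circ f^{(2)}$, and lean on the two pillars you identified (strict suboptimality of any point strictly beyond the largest maximizer $r_{n-2}$, and monotonicity of $R_1$ on $[r_1,\op]$). The difference is in the competitor point: you invoke continuity of $f^{(2)}$ and the intermediate value theorem to produce $p \in [r_1, r_n)$ with $f^{(2)}(p) = r_{n-2}$, which immediately yields $R_1(p) \ge R_1(r_n)$ and hence $R_n(p) > R_n(r_n)$, contradicting optimality of $r_n$ outright. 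The paper instead works with the extremal preimage $f^{(-2)}(r_{n-2})$ (after first checking the sets $\cP_{1,r_{n-2}}$ and $\cP_{2,r_{n-2}}$ are nonempty), derives $R_1[f^{(-2)}(r_{n-2})] < R_1(r_n)$ from the same two pillars, then needs concavity together with monotonicity of $R_1$ to force $f^{(-2)}(r_{n-2}) < r_1$, and finally closes with a separate boundary argument in the case $r_{n-2} = f^{(2)}(r_1)$. Your version dispenses with the $f^{(-k)}$/$\cP_{k,y}$ machinery, the concavity of $R_1$, and that terminal case analysis, at the mild cost of an explicit IVT step (legitimate, since $f$ is stated to be continuous); the paper's version stays within the preimage formalism it reuses in later lemmas. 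One cosmetic quibble: the equality $R_{n-2}\bigl(f^{(2)}(p)\bigr) = R^\opt_{n-2}$ follows from $f^{(2)}(p) = r_{n-2}$ and the definition of $R^\opt_{n-2}$, not from your decomposition identity, so the attribution there should be adjusted.
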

\begin{proof}
 The proof proceeds by contradiction.
 Assume $f^{(2)}(r_n) > r_{n-2}$.
 From $f(\op) \le r_{n-2} < f^{(2)}(r_n)$, $\cP_{1,r_{n-2}}$ is nonempty.
 Also, we have $f(r_n) < f^{(-1)}(r_{n-2}) \le f(r_1)$,
 where the first inequality follows by applying Eq.~\eqref{eq:pqfpfq_inv} to $r_{n-2} < f^{(2)}(r_n)$
 and the second inequality follows by applying Eq.~\eqref{eq:fpq} to
 $r_{n-2} \ge r_1 \ge f^{(2)}(r_1)$, which follows by substituting $n = 1$
 into Lemma~\ref{lemma:rnfrn}.
 Thus, $\cP_{1,f^{(-1)}(r_{n-2})}$ is nonempty.
 From $\cP_{1,f^{(-1)}(r_{n-2})} \subseteq \cP_{2,r_{n-2}}$, $\cP_{2,r_{n-2}}$ is also nonempty.
 Applying Eq.~\eqref{eq:pqfpfq_inv} twice to $f^{(2)}(r_n) > r_{n-2}$
 gives $f^{(-2)}(r_{n-2}) < r_n$.
 Moreover, $R_{n-2}(r_{n-2}) = R^\opt_{n-2} > R_{n-2}[f^{(2)}(r_n)]$ holds.
 Indeed, if $R_{n-2}[f^{(2)}(r_n)] = R^\opt_{n-2}$,
 i.e., $f^{(2)}(r_n) \in \mathop{\argmax}_{p \in [0,\op]} R_{n-2}(p)$, holds,
 then $r_{n-2} \ge f^{(2)}(r_n)$ must hold, which contradicts the assumption.
 Thus, from $R_n[f^{(-2)}(r_{n-2})] \le R^\opt_n = R_n(r_n)$, we have
 \begin{alignat}{1}
  R_n[f^{(-2)}(r_{n-2})] - R_{n-2}(r_{n-2})
  &< R_n(r_n) - R_{n-2}[f^{(2)}(r_n)]. \nonumber \\
  \label{eq:f2r_R1}
 \end{alignat}
 We can rewrite the left-hand side as
 \begin{alignat}{1}
  \lefteqn{ R_n[f^{(-2)}(r_{n-2})] - R_{n-2}(r_{n-2}) } \nonumber \\
  &= \sum_{i=0}^n f^{(i)}[f^{(-2)}(r_{n-2})] - \sum_{i=2}^n f^{(i)}[f^{(-2)}(r_{n-2})] \nonumber \\
  &= \sum_{i=0}^1 f^{(i)}[f^{(-2)}(r_{n-2})] \nonumber \\
  &= R_1[f^{(-2)}(r_{n-2})].
  \label{eq:f2r_R1_Rn}
 \end{alignat}
 Also, we can rewrite the right-hand side of Eq.~\eqref{eq:f2r_R1} as
 \begin{alignat}{1}
  R_n(r_n) - R_{n-2}[f^{(2)}(r_n)] &= \sum_{i=0}^n f^{(i)}(r_n) - \sum_{i=0}^{n-2} f^{(i)}[f^{(2)}(r_n)]
  \nonumber \\
  &= \sum_{i=0}^1 f^{(i)}(r_n) \nonumber \\
  &= R_1(r_n).
  \label{eq:f2p_Rn_Rn2}
 \end{alignat}
 Thus, we have $R_1[f^{(-2)}(r_{n-2})] < R_1(r_n)$.
 Therefore, from $f^{(-2)}(r_{n-2}) < r_n$ and the fact that
 $R_1$ is concave and monotonically nonincreasing in $[r_1,\op]$, we obtain
 \begin{alignat}{1}
  f^{(-2)}(r_{n-2}) &< r_1.
  \label{eq:f2rn2_le_r1}
 \end{alignat}
 Applying Eq.~\eqref{eq:pqfpfq} twice to this inequality and using Lemma~\ref{lemma:rnfrn}
 with the substitution $n = 1$ yields $r_{n-2} \le f^{(2)}(r_1) \le r_1$,
 and thus $r_{n-2} = f^{(2)}(r_1)$ holds from the assumption $r_1 \le r_{n-2}$.
 Thus, we obtain $f^{(-2)}(r_{n-2}) = \max \{ p \mid f^{(2)}(p) = r_{n-2} \} \ge r_1$,
 which contradicts Eq.~\eqref{eq:f2rn2_le_r1}.
\end{proof}

\begin{lemma} \label{lemma:rnr1}
 $r_n \ge r_1$ holds for each $n \in \Natural$.
\end{lemma}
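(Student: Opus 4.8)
The plan is to deduce the lemma from the stronger claim that $R_n$ is monotonically nondecreasing on the interval $[0,r_1]$, which I would prove by induction on $n$. Granting this claim, $\max_{p\in[0,r_1]}R_n(p)=R_n(r_1)$, so the global maximum $R^\opt_n$ of $R_n$ over $[0,\op]$ is attained at some point of $[r_1,\op]$: either $R_n(r_1)=R^\opt_n$, in which case $r_1$ is a maximizer, or $R_n(r_1)<R^\opt_n$, in which case no point of $[0,r_1]$ is a maximizer and every maximizer lies in $(r_1,\op]$. Since $r_n$ is by definition the \emph{largest} maximizer of $R_n$ on $[0,\op]$, either way $r_n\ge r_1$.

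For the induction, the base cases are immediate: $R_0(p)=p$ is nondecreasing, and $R_1=Q'_1$ is nondecreasing on $[0,r_1]$ by the property of $Q'_1$ recorded in the preliminaries. For $n\ge 2$ I would use the identity
\begin{alignat}{1}
 R_n(p) &= R_1(p) + R_{n-2}\!\left(f^{(2)}(p)\right),
\end{alignat}
which follows from $f^{(i+2)}=f^{(i)}\circ f^{(2)}$. The first summand is nondecreasing on $[0,r_1]$ by the base case. For the second, $f^{(2)}=f\circ f$ is nondecreasing, being a composition of two nonincreasing maps, and $f^{(2)}(r_1)\le r_1$ by Lemma~\ref{lemma:rnfrn} applied with $n=1$; hence $f^{(2)}$ maps $[0,r_1]$ into $[0,r_1]$, on which $R_{n-2}$ is nondecreasing by the induction hypothesis, so the composition $R_{n-2}\circ f^{(2)}$ is nondecreasing on $[0,r_1]$. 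Therefore $R_n$ is nondecreasing on $[0,r_1]$, closing the induction.

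The point needing care is precisely that the monotonicity assertion must be kept local to $[0,r_1]$: $R_n$ is generally not concave — indeed $f\circ f$ need not be concave even though $f$ is — so there is no global monotonicity or unimodality to appeal to, and the inductive step works only because Lemma~\ref{lemma:rnfrn} guarantees the invariance $f^{(2)}([0,r_1])\subseteq[0,r_1]$. I would also note an alternative line of argument by contradiction: assuming $r_n<r_1$, Lemma~\ref{lemma:f2r} (valid under the induction hypothesis $r_{n-2}\ge r_1$) gives $f^{(2)}(r_n)\le r_{n-2}$; comparing $R_n(r_n)$ with $R_n(r_1)$ through the displayed identity and using $R_1(r_n)\le R^\opt_1$ forces $R_{n-2}(f^{(2)}(r_n))>R_{n-2}(f^{(2)}(r_1))$, contradicting the monotonicity of $R_{n-2}$ on $[0,r_{n-2}]$ since $f^{(2)}(r_n)\le f^{(2)}(r_1)\le r_1\le r_{n-2}$. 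This variant leans on the same localized monotonicity that the strengthened induction supplies, so I would favour proving that monotonicity statement directly.
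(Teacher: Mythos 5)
Your proof is correct, but it follows a genuinely different route from the paper's. You strengthen the statement to ``$R_n$ is monotonically nondecreasing on $[0,r_1]$'' and prove it by a two-step induction via the identity $R_n(p)=R_1(p)+R_{n-2}\bigl(f^{(2)}(p)\bigr)$, using only that $f$ is nonincreasing (so $f^{(2)}$ is nondecreasing), the preliminaries' fact that $R_1$ is nondecreasing on $[0,r_1]$, and Lemma~\ref{lemma:rnfrn} with $n=1$ to get the invariance $f^{(2)}([0,r_1])\subseteq[0,r_1]$; the lemma then falls out of the definition of $r_n$ as the \emph{largest} maximizer, exactly as you argue in the two-case discussion. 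The paper instead runs an interleaved induction: it first shows $r_{2k}\ge r_{2k-1}$ by comparing $f^{(2k)}(r_{2k})$ with $f^{(2k)}(r_{2k-1})$ through the optimality of $r_{2k}$ and $r_{2k-1}$ and Eq.~\eqref{eq:pqfpfq_inv}, and then shows $r_{2k+1}\ge r_{2k-1}$ by comparing $R_1(s_+)$ with $R_1(s_-)$ for $s_\pm=f^{(2k)}(r_{2k\pm1})$, where Lemma~\ref{lemma:f2r} (whose hypothesis $r_{n-2}\ge r_1$ is supplied by the induction) is used to place $s_\pm$ in $[0,r_1]$. The paper's argument yields slightly more information along the way (the ordering $r_{2k}\ge r_{2k-1}$ and the monotone growth of the odd-index $r$'s), and it intertwines with Lemma~\ref{lemma:f2r}; your argument is more elementary and self-contained, bypasses Lemma~\ref{lemma:f2r} entirely, and isolates the one structural fact that makes everything work, namely the $f^{(2)}$-invariance of $[0,r_1]$. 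Your closing caveat is also well taken: no global unimodality of $R_n$ is available, and keeping the monotonicity claim local to $[0,r_1]$ is precisely what makes the induction close. The sketched ``alternative by contradiction'' is dispensable and somewhat less careful, but since you explicitly do not rely on it, the main proof stands.
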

\begin{proof}
 Since $r_0 = \op \ge r_1$ and $r_1 \ge r_1$ hold, it suffices to consider the case
 of $n \ge 2$.
 For any $k \in \Natural$ with $k \ge 1$, it is sufficient to show
 $r_{2k} \ge r_{2k-1}$ and $r_{2k+1} \ge r_1$ (in which case, for example,
 $r_6 \ge r_5 \ge r_1$ holds).

 We first show $r_{2k} \ge r_{2k-1}$.
 We have
 \begin{alignat}{1}
  f^{(2k)}(r_{2k}) &= R_{2k}(r_{2k}) - R_{2k-1}(r_{2k}) \nonumber \\
  &\ge R_{2k}(r_{2k-1}) - R_{2k-1}(r_{2k-1}) \nonumber \\
  &= f^{(2k)}(r_{2k-1}),
 \end{alignat}
 where the inequality follows from
 $R_{2k}(r_{2k}) \ge R_{2k}(r_{2k-1})$ and $R_{2k-1}(r_{2k}) \le R_{2k-1}(r_{2k-1})$.
 If $f^{(2k)}(r_{2k}) = f^{(2k)}(r_{2k-1})$ holds, then
 $R_{2k}(r_{2k}) = R_{2k}(r_{2k-1})$ (from which $r_{2k} \ge r_{2k-1}$ follows)
 and $R_{2k-1}(r_{2k}) = R_{2k-1}(r_{2k-1})$ (from which $r_{2k} \le r_{2k-1}$ follows) hold,
 and thus $r_{2k} = r_{2k-1}$ holds.
 If $f^{(2k)}(r_{2k}) > f^{(2k)}(r_{2k-1})$ holds, then applying Eq.~\eqref{eq:pqfpfq_inv}
 $2k$ times yields $r_{2k} > r_{2k-1}$.

 Next, we show $r_{2k+1} \ge r_1$.
 It is sufficient to assume the inductive hypothesis
 $r_{2k-1} \ge r_{2k-3} \ge \dots \ge r_3 \ge r_1$ and prove
 $r_{2k+1} \ge r_{2k-1}$.
 Let $s_\pm \coloneqq f^{(2k)}(r_{2k \pm 1})$; then, we have
 \begin{alignat}{1}
  R_1(s_+) &= R_{2k+1}(r_{2k+1}) - R_{2k-1}(r_{2k+1}) \nonumber \\
  &\ge R_{2k+1}(r_{2k-1}) - R_{2k-1}(r_{2k-1}) \nonumber \\
  &= R_1(s_-),
  \label{eq:rnr1_spm}
 \end{alignat}
 where the inequality follows from $R_{2k+1}(r_{2k+1}) \ge R_{2k+1}(r_{2k-1})$ and
 $R_{2k-1}(r_{2k+1}) \le R_{2k-1}(r_{2k-1})$.
 On the other hand, by repeatedly applying Lemma~\ref{lemma:f2r} and using Eq.~\eqref{eq:pqfpfq},
 we obtain $s_+ \le f^{(2k-2)}(r_{2k-1}) \le \dots \le r_1$ and
 $s_- \le f^{(2k-2)}(r_{2k-3}) \le \dots \le f^{(2)}(r_1) \le r_1$
 (the last inequality follows by substituting $n = 1$ into Lemma~\ref{lemma:rnfrn}),
 and thus both $s_+$ and $s_-$ are less than or equal to $r_1$.
 Combining this with Eq.~\eqref{eq:rnr1_spm} [i.e., $R_1(s_+) \ge R_1(s_-)$] and
 the fact that $R_1$ is monotonically nondecreasing on $[0, r_1]$,
 we conclude that $s_+ \ge s_-$.
 When $s_+ = s_-$, $R_1(s_+) = R_1(s_-)$ holds, and thus from Eq.~\eqref{eq:rnr1_spm},
 we obtain $R_{2k+1}(r_{2k+1}) = R_{2k+1}(r_{2k-1})$
 (from which $r_{2k+1} \ge r_{2k-1}$ follows) and
 $R_{2k-1}(r_{2k+1}) = R_{2k-1}(r_{2k-1})$ (from which $r_{2k+1} \le r_{2k-1}$ follows),
 which gives $r_{2k+1} = r_{2k-1}$.
 When $s_+ > s_-$, applying Eq.~\eqref{eq:pqfpfq_inv} $2k$ times yields
 $r_{2k+1} > r_{2k-1}$.
\end{proof}

\begin{cor} \label{cor:f2r}
 $f^{(2)}(r_n) \le r_{n-2}$ holds for each $n \in \Natural$ with $n \ge 2$.
\end{cor}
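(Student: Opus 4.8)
The plan is to obtain this statement as an immediate consequence of the two lemmas just proved. Lemma~\ref{lemma:f2r} already establishes the desired inequality $f^{(2)}(r_n) \le r_{n-2}$ for every $n \ge 2$, but only under the side condition $r_{n-2} \ge r_1$. So the only thing left to do is to discharge that hypothesis.

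First I would apply Lemma~\ref{lemma:rnr1}, which asserts $r_m \ge r_1$ for every $m \in \Natural$; taking $m = n-2$ (which lies in $\Natural$ precisely because $n \ge 2$) gives $r_{n-2} \ge r_1$. Then I would substitute this into Lemma~\ref{lemma:f2r} to conclude $f^{(2)}(r_n) \le r_{n-2}$ for all $n \ge 2$, which is exactly the claimed corollary.

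There is essentially no obstacle: the corollary is just the composition of Lemma~\ref{lemma:f2r} with Lemma~\ref{lemma:rnr1}, the latter serving to strip away the conditional hypothesis of the former. The substantive work has already been carried out in the proof of Lemma~\ref{lemma:rnr1}, whose induction itself invokes the conditional form Lemma~\ref{lemma:f2r}; it is precisely this ordering — conditional lemma, then the chain bound $r_n \ge r_1$, then the unconditional corollary — that keeps the reasoning free of circularity, and I would simply make that dependency explicit in the one-line proof.
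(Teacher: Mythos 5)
Your proposal is correct and coincides with the paper's own proof: the corollary is obtained by combining Lemma~\ref{lemma:rnr1} (to supply the hypothesis $r_{n-2} \ge r_1$) with Lemma~\ref{lemma:f2r}. Your remark on the non-circular ordering of the two lemmas is accurate but not needed beyond the one-line deduction.
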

\begin{proof}
 This follows directly from Lemmas~\ref{lemma:f2r} and \ref{lemma:rnr1}.
\end{proof}

\begin{lemma} \label{lemma:tnrn}
 $t_n \le r_n$ holds for each $n \in \Natural$.
\end{lemma}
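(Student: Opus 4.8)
The plan is to prove the inequality by contradiction, relying only on the elementary recursion $R_n(p) = p + R_{n-1}(f(p))$, valid for all $n \ge 1$ and $p \in [0,\op]$. This follows from the definition of $R_n$ by splitting off the $i = 0$ term and reindexing the sum $\sum_{i=1}^n f^{(i)}(p) = \sum_{j=0}^{n-1} f^{(j)}(f(p))$; notably, neither the preceding lemmas nor the analytic properties of $f$ (monotonicity, concavity, continuity) appear to be needed.

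First I would dispose of the degenerate cases. If $n = 0$, then $t_0 = 0 \le \op = r_0$. If $n \ge 1$ and $\cP_{1,r_{n-1}}$ is empty, then $t_n = 0$ by Eq.~\eqref{eq:tn} while $r_n \in [0,\op]$, so again $t_n \le r_n$ (one could alternatively cite Lemma~\ref{lemma:rnfop}). It then remains to treat $n \ge 1$ with $\cP_{1,r_{n-1}}$ nonempty; in that case $t_n = f^{(-1)}(r_{n-1}) = \min \cP_{1,r_{n-1}} \in [0,\op]$, and by the remark following the definition of $f^{(-k)}$ we have $f(t_n) = r_{n-1}$.

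Next I would record the two evaluations of $R_n$ that drive the argument. Evaluating the recursion at $p = r_n$ gives $R^\opt_n = R_n(r_n) = r_n + R_{n-1}(f(r_n)) \le r_n + R^\opt_{n-1}$, since $R^\opt_{n-1}$ is the maximum of $R_{n-1}$ over $[0,\op]$. Evaluating it at $p = t_n$ and using $f(t_n) = r_{n-1}$ gives $R_n(t_n) = t_n + R_{n-1}(r_{n-1}) = t_n + R^\opt_{n-1}$. Suppose now, for contradiction, that $t_n > r_n$. Then $R_n(t_n) = t_n + R^\opt_{n-1} > r_n + R^\opt_{n-1} \ge R^\opt_n$, contradicting $R_n(t_n) \le R^\opt_n$; the latter is legitimate precisely because $t_n \in [0,\op]$, which is why the inclusion $\cP_{1,y} \subseteq [0,\op]$ was worth noting above. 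Hence $t_n \le r_n$. I do not anticipate any genuine obstacle here — the core argument is three lines once the recursion is in place — so the only care required is in the bookkeeping of the two degenerate cases and in confirming membership of $t_n$ in $[0,\op]$.
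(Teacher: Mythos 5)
Your proof is correct and is essentially the paper's argument: the paper handles the same degenerate cases and then writes the direct chain $t_n = R_n[f^{(-1)}(r_{n-1})] - R_{n-1}(r_{n-1}) \le R_n(r_n) - R_{n-1}[f(r_n)] = r_n$, which uses exactly your two ingredients ($R_n(t_n) \le R_n(r_n) = R^\opt_n$ and $R_{n-1}(r_{n-1}) = R^\opt_{n-1} \ge R_{n-1}[f(r_n)]$) together with the recursion $R_n(p) = p + R_{n-1}[f(p)]$ and $f(t_n) = r_{n-1}$. Recasting it as a contradiction changes nothing of substance.
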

\begin{proof}
 The case $n = 0$ follows directly from $t_0 = 0 \le \op = r_0$.
 Now we consider the case $n \ge 1$.
 If $\cP_{1,r_{n-1}}$ is empty, then $t_n = 0 \le r_n$ holds.
 Otherwise, we have
 \begin{alignat}{1}
  t_n &= f^{(-1)}(r_{n-1}) \nonumber \\
  &= R_n[f^{(-1)}(r_{n-1})] - R_{n-1}(r_{n-1}) \nonumber \\
  &\le R_n(r_n) - R_{n-1}[f(r_n)] \nonumber \\
  &= r_n,
 \end{alignat}
 where the inequality follows from
 $R_n[f^{(-1)}(r_{n-1})] \le R_n(r_n)$ and $R_{n-1}(r_{n-1}) \ge R_{n-1}[f(r_n)]$.
\end{proof}

\begin{lemma} \label{lemma:fptn}
 For each $n \in \Natural$ with $n \ge 1$, $f(p) \ge t_{n-1}$ holds for any $p \in [0, r_n]$.
\end{lemma}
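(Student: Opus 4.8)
We must show that for every $n \ge 1$ and every $p \in [0, r_n]$ we have $f(p) \ge t_{n-1}$.

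**Approach.** The plan is to split on whether $\cP_{1,r_{n-2}}$ is nonempty, so as to resolve the definition of $t_{n-1}$. If $\cP_{1,r_{n-2}}$ is empty (in particular when $n = 1$, since $t_0 = 0$), then $t_{n-1} = 0$ and the claim $f(p) \ge 0$ is immediate from the nonnegativity of $f$ recorded in the Preliminaries. So the substance is the case $\cP_{1,r_{n-2}}$ nonempty, where $t_{n-1} = f^{(-1)}(r_{n-2})$. Here I would use monotonicity: since $f$ is monotonically nonincreasing on $[0,\op]$ (Preliminaries), it suffices to check the inequality at the right endpoint $p = r_n$, i.e.\ to prove $f(r_n) \ge t_{n-1} = f^{(-1)}(r_{n-2})$.

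**Key step.** By Corollary~\ref{cor:f2r} we have $f^{(2)}(r_n) \le r_{n-2}$, that is $f(f(r_n)) \le r_{n-2}$. Now apply Eq.~\eqref{eq:fpq}, which says that when $\cP_{1,r_{n-2}}$ is nonempty, $r_{n-2} \ge f(q)$ implies $f^{(-1)}(r_{n-2}) \le q$; taking $q = f(r_n)$ gives exactly $f^{(-1)}(r_{n-2}) \le f(r_n)$, i.e.\ $t_{n-1} \le f(r_n)$. Combined with $f(p) \ge f(r_n)$ for $p \in [0,r_n]$ (monotonicity, Eq.~\eqref{eq:pqfpfq}), this yields $f(p) \ge t_{n-1}$ for all $p \in [0,r_n]$, as desired.

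**Main obstacle.** The only delicate point is making sure $\cP_{1,r_{n-2}}$ being nonempty is genuinely the right case split and that Eq.~\eqref{eq:fpq} is applicable — i.e.\ that $t_{n-1}$ and $f^{(-1)}(r_{n-2})$ really do coincide under that hypothesis, which follows directly from the definition~\eqref{eq:tn} with index shifted to $n-1$. One should also double-check the boundary $n=1$: there $t_0 = 0$ by definition, so the empty-$\cP$ branch covers it regardless of whether $\cP_{1,r_{-1}}$ even makes sense. Everything else is a one-line chain of the already-established monotonicity facts and Corollary~\ref{cor:f2r}, so I expect no real difficulty beyond bookkeeping the index shift.
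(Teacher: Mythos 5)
Your proposal is correct and follows essentially the same route as the paper's proof: the same case split on whether $\cP_{1,r_{n-2}}$ is empty (with $t_{n-1}=0$ handling that branch, including $n=1$), and in the nonempty case the same chain $f(p)\ge f(r_n)\ge f^{(-1)}(r_{n-2})=t_{n-1}$ obtained from Eq.~\eqref{eq:pqfpfq}, Corollary~\ref{cor:f2r}, and Eq.~\eqref{eq:fpq}. No gaps to report.
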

\begin{proof}
 If $\cP_{1,r_{n-2}}$ is empty, then this follows directly from $t_{n-1} = 0$.
 Otherwise, from $p \le r_n$ and Eq.~\eqref{eq:pqfpfq}, we have $f(p) \ge f(r_n)$.
 Applying Eq.~\eqref{eq:fpq} to $f^{(2)}(r_n) \le r_{n-2}$,
 which is obtained from Corollary~\ref{cor:f2r}, gives $f(r_n) \ge f^{(-1)}(r_{n-2}) = t_{n-1}$.
 Therefore, we obtain $f(p) \ge f(r_n) \ge t_{n-1}$.
\end{proof}

\subsection{Proof}

We are now ready to prove Theorem~\ref{thm:R}.
The left- and right-hand sides of Eq.~\eqref{eq:uPN} are equal to
$Q_N(\op)/(N+1)$ and $R^\opt_N/(N+1)$, respectively.
Therefore, it suffices to show that $Q_n(\op) = R^\opt_n$ holds for any $n \in \Natural$.
To establish this equality, it suffices to verify the following three conditions:
\begin{enumerate}[label=(\arabic*)]
 \item $Q'_n(p) \le R^\opt_n$ holds for any $p < t_n$.
 \item $Q'_n(p) = R_n(p)$ holds for any $t_n \le p \le r_n$.
 \item $Q'_n(p) \le R^\opt_n$ holds for any $p > r_n$.
\end{enumerate}
Note that $t_n \le r_n$ holds by Lemma~\ref{lemma:tnrn}.
In this case, $Q'_n(p)$ attains its maximum value $R^\opt_n$ at $p = r_n$, and thus,
from Eq.~\eqref{eq:Qn_max}, we obtain
\begin{alignat}{1}
 Q_n(p) &= \max_{q \in [0,p]} Q'_n(q) = R^\opt_n
 \label{eq:Qn_proof}
\end{alignat}
for any $p \ge r_n$, and thus $Q_n(\op) = R^\opt_n$.
Furthermore, if Conditions~(1)--(3) hold, then from the concavity of $Q'_n$
and Eq.~\eqref{eq:Qn_max}, we obtain
\begin{alignat}{1}
 Q_n(p) &=
 \begin{dcases}
  Q'_n(p), & p < t_n, \\
  R_n(p), & t_n \le p \le r_n, \\
  R^\opt_n, & p > r_n.
 \end{dcases}
 \label{eq:Qn_proof0}
\end{alignat}

For $n = 0$ and $n = 1$, we have $Q'_n(p) = R_n(p)$, so Conditions~(1)--(3) hold.
In what follows, we consider the case $n \ge 2$.
To prove the result by induction, we assume that Conditions~(1)--(3) hold for $n-1$ and $n-2$.

Condition~(1): Consider the case $p < t_n$.
Since $t_n > 0$ holds, we have $f(t_n) = r_{n-1}$ from Eq.~\eqref{eq:tn}, and thus
\begin{alignat}{1}
 Q'_n(p) &= p + Q_{n-1}[f(p)] \nonumber \\
 &= p + R^\opt_{n-1} \nonumber \\
 &< t_n + R_{n-1}(r_{n-1}) \nonumber \\
 &= R_n(t_n) \nonumber \\
 &\le R^\opt_n,
 \label{eq:Qn_1}
\end{alignat}
where the first equality follows from Eq.~\eqref{eq:Qn_pQ}.
The second equality follows from $p < t_n$ and Eq.~\eqref{eq:pqfpfq},
which imply $f(p) \ge f(t_n)$, together with $f(t_n) = r_{n-1}$ and Eq.~\eqref{eq:Qn_proof}.

Condition~(2): Consider the case $t_n \le p$ and $f(p) \ge t_{n-1}$.
[Note that by Lemma~\ref{lemma:fptn}, the case $t_n \le p \le r_n$ is a special case of this.
This broader setting facilitates the subsequent proof of Condition~(3)].
From $p \ge t_n$, we have $f(p) \le r_{n-1}$.
[Indeed, if $\cP_{1,r_{n-1}}$ is empty, then either $r_{n-1} < f(\op)$ or $f(0) < r_{n-1}$ holds.
Since $r_{n-1} \ge f(\op)$ holds by Lemma~\ref{lemma:rnfop}, $f(0) < r_{n-1}$ must hold.
Therefore, $f(p) \le f(0) < r_{n-1}$ holds.
Otherwise, applying Eq.~\eqref{eq:pqfpfq} to $p \ge t_n$ yields $f(p) \le f(t_n)$,
and since $f(t_n) = r_{n-1}$ holds by Eq.~\eqref{eq:tn}, we obtain $f(p) \le r_{n-1}$.]
Thus, $t_{n-1} \le f(p) \le r_{n-1}$ holds, and from Eq.~\eqref{eq:Qn_proof0},
we have $Q_{n-1}[f(p)] = R_{n-1}[f(p)]$.
Therefore, we obtain
\begin{alignat}{1}
 Q'_n(p) &= p + Q_{n-1}[f(p)] \nonumber \\
 &= p + R_{n-1}[f(p)] \nonumber \\
 &= R_n(p),
 \label{eq:Qn_2}
\end{alignat}
where the first line follows from Eq.~\eqref{eq:Qn_pQ}.

Condition~(3): Consider the case $p > r_n$.
If $f(p) \ge t_{n-1}$ holds, then Eq.~\eqref{eq:Qn_2} holds, and thus
$Q'_n(p) = R_n(p) \le R^\opt_n$ holds.
In what follows, we consider the case $f(p) < t_{n-1}$.
From $t_{n-1} > 0$ and Eq.~\eqref{eq:tn}, we have $t_{n-1} = f^{(-1)}(r_{n-2})$.
Let $S(p) \coloneqq p + f(p) + R^\opt_{n-2}$; then, we have
\begin{alignat}{1}
 Q'_n(p) &= p + Q_{n-1}[f(p)] \nonumber \\
 &= p + Q'_{n-1}[f(p)] \nonumber \\
 &= p + f(p) + Q_{n-2}[f^{(2)}(p)] \nonumber \\
 &= S(p),
 \label{eq:QnSp}
\end{alignat}
where the first and third lines follow from Eq.~\eqref{eq:Qn_pQ}, and
the second line follows from $f(p) < t_{n-1}$ and Eq.~\eqref{eq:Qn_proof0},
which implies $Q_{n-1}[f(p)] = Q'_{n-1}[f(p)]$.
In the last line, the inequality $f^{(2)}(p) \ge r_{n-2}$ follows from
$f(p) < t_{n-1}$ and Eq.~\eqref{eq:pqfpfq}, which yield $f^{(2)}(p) \ge f(t_{n-1})$,
together with $f(t_{n-1}) = r_{n-2}$.
Then, from Eq.~\eqref{eq:Qn_proof}, we obtain $Q_{n-2}[f^{(2)}(p)] = R^\opt_{n-2}$.
We distinguish between the cases $f^{(2)}(p) = r_{n-2}$ and $f^{(2)}(p) > r_{n-2}$.
When $f^{(2)}(p) = r_{n-2}$, we have $S(p) = R_n(p)$ from $R^\opt_{n-2} = R_{n-2}[f^{(2)}(p)]$.
Thus, we have, from Eq.~\eqref{eq:QnSp},
\begin{alignat}{1}
 Q'_n(p) &= S(p) = R_n(p) \le R^\opt_n.
\end{alignat}
In what follows, consider the case $f^{(2)}(p) > r_{n-2}$.
From Corollary~\ref{cor:f2r}, we have $f^{(2)}(r_n) \le r_{n-2}$.
When $f^{(2)}(r_n) = r_{n-2}$, we have, from Eq.~\eqref{eq:QnSp},
\begin{alignat}{1}
 Q'_n(p) &= S(p) \nonumber \\
 &= R_1(p) + R_{n-2}[f^{(2)}(r_n)] \nonumber \\
 &\le R_1(r_n) + R_{n-2}[f^{(2)}(r_n)] \nonumber \\
 &= R_n(r_n) \nonumber \\
 &= R^\opt_n,
\end{alignat}
where the inequality follows from $r_1 \le r_n < p$ (see Lemma~\ref{lemma:rnr1}) and
from the monotonic decrease of $R_1$ over the interval $[r_1, \op]$,
which lead to $R_1(p) \le R_1(r_n)$.
When $f^{(2)}(r_n) < r_{n-2}$, Eq.~\eqref{eq:pqfpfq_inv} yields
$f(r_n) > f^{(-1)}(r_{n-2}) = t_{n-1} > f(p)$.
Therefore, $\cP_{1,t_{n-1}}$ is nonempty, and thus $\gamma \coloneqq f^{(-1)}(t_{n-1})$ is
well-defined.
Applying Eq.~\eqref{eq:pqfpfq_inv} to $f(r_n) > f(\gamma) > f(p)$ gives $r_n < \gamma < p$.
Using this and the concavity of $Q'_n$, we obtain
$Q'_n(\gamma) \ge \eta Q'_n(r_n) + (1 - \eta) Q'_n(p)$, i.e.,
\begin{alignat}{1}
 \eta Q'_n(p) &\le Q'_n(\gamma) - (1 - \eta) Q'_n(r_n),
 \label{Qn_concave}
\end{alignat}
where $\eta \coloneqq (p - \gamma)/(p - r_n)$.
Since Condition~(2) holds when $p = r_n$, Eq.~\eqref{eq:Qn_2} with the substitution $p = r_n$
yields $Q'_n(r_n) = R_n(r_n) = R^\opt_n$.
Also, since $t_n \le r_n < \gamma$ and $f(\gamma) = t_{n-1}$ hold,
Condition~(2) holds when $p = \gamma$,
and Eq.~\eqref{eq:Qn_2} with the substitution $p = \gamma$
gives $Q'_n(\gamma) = R_n(\gamma)$.
Therefore, from Eq.~\eqref{Qn_concave}, we obtain
$\eta Q'_n(p) \le R_n(\gamma) - (1 - \eta) R^\opt_n \le \eta R^\opt_n$,
and dividing the first and last terms by $\eta$ yields $Q'_n(p) \le R^\opt_n$.
Consequently, in all cases, we have $Q'_n(p) \le R^\opt_n$, completing the proof of Condition~(3).

\section{Proof of Theorem~\ref{thm:unamb_opt}} \label{sec:unamb_opt}

We define $t$, $\lambda_0$, and $\lambda_1$ as described in Sec.~\ref{sec:unitary}.
We also define $c$, $\ket{+}$, and $u$ by Eqs.~\eqref{eq:c}, \eqref{eq:ket_plus}, and \eqref{eq:u},
respectively.
Let
\begin{alignat}{1}
 \tU &\coloneqq U_0^\dagger U_1, \quad
 \w \coloneqq \frac{\lambda_1}{\lambda_0};
 \label{eq:tU}
\end{alignat}
then, we have
\begin{alignat}{1}
 u &= \frac{\lambda_0 + \lambda_1}{2 \lambda_0 t} = \frac{1+\w}{2t}, \quad
 u^2 = \w, \quad
 |u| = |\w| = 1.
\end{alignat}
Since $P^{(N)} = 1 = \uP{N}$ holds when $t = 0$, and $P^{(N)} = 0 = \uP{N}$ holds when $t = 1$,
it suffices to consider the case $0 < t < 1$.

We consider a tester $\{ \mD_m \}_{m=0}^M$ whose each element
$\mD_m \coloneqq \Pi_m \ast \mD^{(N)} \ast \cdots \ast \mD^{(1)}$ is represented by
Eq.~\eqref{eq:comb2_pdf},
where $\mD^{(1)} = \rho$ is a pure state and $\mD^{(2)},\dots,\mD^{(N)}$ are unitary channels.
We assume that all systems $\V_1, \dots, \V_N$ and $\W_1, \dots, \W_N$ have the same level,
and likewise for $\V'_1, \dots, \V'_N$.
We also assume that the level of $\V'_1$, denoted by $d'$, is sufficiently large.
When the change point is $k \in \{0, \dots, N\}$, the state immediately before and after
applying the channel at step $n \in \{1, \dots, N\}$ are pure states of $\V_n \ot \V'_n$ and
$\W_n \ot \V'_n$, denoted by $\ket{\varphi^{(n)}_k}$ and $\ket{\psi^{(n)}_k}$, respectively.
These states are depicted by
\begin{widetext}
\begin{alignat}{1}
 &\includegraphics[scale=1.0]{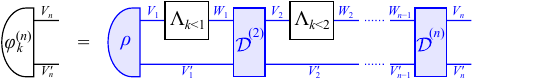} ~\raisebox{1em}{,} \nonumber \\
 &\includegraphics[scale=1.0]{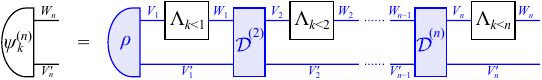} ~\raisebox{1em}{.}
 \label{eq:psi_n}
\end{alignat}
\end{widetext}
We have
\begin{alignat}{1}
 \braket{\varphi^{(n)}_i|\varphi^{(n)}_j} &= \braket{\psi^{(n-1)}_i|\psi^{(n-1)}_j},
 \quad \forall n \ge 2, ~i,j \in \{0, \dots, n\}, \nonumber \\
 \ket{\psi^{(n)}_k} &=
 \begin{dcases}
  (U_1 \ot \I_{d'}) \ket{\varphi^{(n)}_k}, & k < n, \\
  (U_0 \ot \I_{d'}) \ket{\varphi^{(n)}_{n-1}}, & k \ge n,
 \end{dcases}
\end{alignat}
where the first line follows from the preservation of inner products before and after
applying $\mD^{(n)}$.
Thus, we obtain
\begin{alignat}{1}
 \ket{\varphi^{(n)}_N} &= \ket{\varphi^{(n)}_{N-1}} = \cdots = \ket{\varphi^{(n)}_{n-1}}, \nonumber \\
 \ket{\psi^{(n)}_N} &= \ket{\psi^{(n)}_{N-1}} = \cdots = \ket{\psi^{(n)}_n}.
\end{alignat}

Let $\ket{i}_n$ (or simply $\ket{i}$) denote the $(n+1)$-dimensional column vector
whose $(i+1)$-th entry is $1$
and all others are $0$.
Define the square matrix $\tG^{(n)}$ of order $n+1$ as
\begin{alignat}{1}
 \tG^{(n)} &\coloneqq \sum_{i=0}^n (1 - tc_i) \ket{i}\bra{i} + t \ket{\mu_n} \bra{\mu_n},
 \label{eq:tG}
\end{alignat}
where $c_i$ is given by Eq.~\eqref{eq:ci}, and
\begin{alignat}{1}
 \ket{\mu_n} \coloneqq \sum_{i=0}^n u^i \sqrt{c_i} \ket{i}.
\end{alignat}
$\tG^{(N)}$ is equal to the right-hand side of Eq.~\eqref{eq:GN}.
For each $i, j \in \{0, \dots, n\}$, we have
\begin{alignat}{1}
 \braket{i|\tG^{(n)}|j} &=
 \begin{dcases}
  1, & i = j, \\
  t u^{i-j} \sqrt{c_i c_j}, & i \neq j.
 \end{dcases}
 \label{eq:iGnj}
\end{alignat}
For example, we have
\begin{alignat}{1}
 \tG^{(3)} &=
 \begin{bmatrix}
  1 & t u^{-1} & t u^{-2} c & t u^{-3} \\
  t u & 1 & t u^{-1} & t u^{-2} c^{-1} \\
  t u^2 c & t u & 1 & t u^{-1} \\
  t u^3 & t u^2 c^{-1} & t u & 1 \\
 \end{bmatrix}, \nonumber \\
 \tG^{(4)} &=
 \begin{bmatrix}
  1 & t u^{-1} & t u^{-2} c & t u^{-3} & t u^{-4} c \\
  t u & 1 & t u^{-1} & t u^{-2} c^{-1} & t u^{-3} \\
  t u^2 c & t u & 1 & t u^{-1} & t u^{-2} c \\
  t u^3 & t u^2 c^{-1} & t u & 1 & t u^{-1} \\
  t u^4 c & t u^3 & t u^2 c & t u & 1 \\
 \end{bmatrix}.
\end{alignat}
$\tG^{(n)}$ is equal to the principal submatrix of order $n+1$
obtained by deleting the $(n+2)$-th row and $(n+2)$-th column of $\tG^{(n+1)}$.

From the argument in Sec.~\ref{sec:unitary}, it is sufficient to demonstrate that
there exists a process $\mD^{(N)} \ast \cdots \ast \mD^{(1)}$ such that
the Gram matrix of $\{ \ket{\psi^{(N)}_k} \}_{k=0}^N$ is equal to $\tG^{(N)}$.
To show this, we prove that for each $n \in \{1,\dots,N\}$,
there exist a state $\rho$ and unitary channels $\mD^{(2)},\dots,\mD^{(n)}$ such that
the Gram matrix of $\{ \ket{\psi^{(n)}_k} \}_{k=0}^n$ equals $\tG^{(n)}$.

First, consider the case $n = 1$.
We take the state $\rho$ to be the pure state $\ket{+} \ot \ket{0}$,
with $\ket{0}$ being a normalized vector in $\V'_1$.
Then, $\braket{\psi^{(1)}_1|\psi^{(1)}_0} = \braket{+|\tU|+} = tu$ holds,
where the first and second equalities follow from Eqs.~\eqref{eq:tU} and \eqref{eq:u}, respectively.
Therefore, the Gram matrix of $\{ \ket{\psi^{(1)}_k} \}_{k=0}^1$ is equal to
\begin{alignat}{1}
 \tG^{(1)} &=
 \begin{bmatrix}
  1 & t u^{-1} \\
  t u & 1 \\
 \end{bmatrix}.
\end{alignat}

Next, suppose that for some $n \in \{1,\dots,N-1\}$, there exist a state $\rho$ and
unitary channels $\mD^{(2)},\dots,\mD^{(n)}$ such that
the Gram matrix of $\{ \ket{\psi^{(n)}_k} \}_{k=0}^n$ is equal to $\tG^{(n)}$.
We then show that there exists a unitary channel $\mD^{(n+1)}$ such that
the Gram matrix of $\{ \ket{\psi^{(n+1)}_k} \}_{k=0}^{n+1}$ is equal to $\tG^{(n+1)}$.
Let the channel $\mD^{(n+1)}$ be represented by a unitary matrix $D^{(n+1)}$ of order $dd'$;
then, since
\begin{alignat}{1}
 \ket{\psi^{(n+1)}_k} &= (U_1 \ot \I_{d'}) \ket{\varphi^{(n+1)}_k}
 = (U_1 \ot \I_{d'}) D^{(n+1)} \ket{\psi^{(n)}_k}
 \label{eq:psik}
\end{alignat}
holds for any $k \in \{0,\dots,n\}$, we have, for each $j, k \in \{0,\dots,n\}$,
\begin{alignat}{1}
 \braket{\psi^{(n+1)}_j|\psi^{(n+1)}_k} &= \braket{\psi^{(n)}_j|\psi^{(n)}_k}
 = \braket{j|\tG^{(n)}|k} = \braket{j|\tG^{(n+1)}|k},
\end{alignat}
i.e., the $(j+1,k+1)$ entry of the Gram matrix of $\{ \ket{\psi^{(n+1)}_k} \}_{k=0}^{n+1}$
is equal to $\braket{j|\tG^{(n+1)}|k}$.
Therefore, it suffices to prove that for each $k \in \{0,\dots,n\}$,
$\braket{\psi^{(n+1)}_{n+1}|\psi^{(n+1)}_k} = \braket{n+1|\tG^{(n+1)}|k}$ holds,
or equivalently, that there exists a unitary matrix $D^{(n+1)}$ such that $\bra{\nu} = \bra{\nu'}$,
where
\begin{alignat}{1}
 \bra{\nu} &\coloneqq
 \bra{\psi^{(n+1)}_{n+1}}
 \begin{bmatrix}
  \ket{\psi^{(n+1)}_0} & \ket{\psi^{(n+1)}_1} & \cdots & \ket{\psi^{(n+1)}_n} \\
 \end{bmatrix}, \nonumber \\
 \bra{\nu'} &\coloneqq \bra{n+1} \tG^{(n+1)}
 \begin{bmatrix}
  \ket{0} & \ket{1} & \cdots & \ket{n} \\
 \end{bmatrix}.
 \label{eq:nu}
\end{alignat}
Similar to Eq.~\eqref{eq:psik}, we have
\begin{alignat}{1}
 \ket{\psi^{(n+1)}_{n+1}} &= (U_0 \ot \I_{d'}) \ket{\varphi^{(n+1)}_n}
 = (U_0 \ot \I_{d'}) D^{(n+1)} \ket{\psi^{(n)}_n}.
 \label{eq:psin}
\end{alignat}
Let $\Xi$ denote the projection matrix onto the image of $\tG^{(n)}$.
Note that if $\tG^{(n)}$ is nonsingular, then $\Xi = \I_{n+1}$ holds.
There exists a $dd' \times (n+1)$ matrix $\Theta$ satisfying
$\Theta = \Theta \Xi$, $\Theta^\dagger \Theta = \Xi$, and
\begin{alignat}{1}
 \begin{bmatrix}
  \ket{\psi^{(n)}_0} & \ket{\psi^{(n)}_1} & \cdots & \ket{\psi^{(n)}_n} \\
 \end{bmatrix}
 &= \Theta \tG^{(n)}{}^\frac{1}{2}.
 \label{eq:Theta}
\end{alignat}
Indeed,
\begin{alignat}{1}
 \Theta &\coloneqq
 \begin{bmatrix}
  \ket{\psi^{(n)}_0} & \ket{\psi^{(n)}_1} & \cdots & \ket{\psi^{(n)}_n} \\
 \end{bmatrix}
 \tG^{(n)}{}^\frac{1}{2}{}^+
\end{alignat}
satisfies these conditions,
where ${}^+$ denotes the Moore–Penrose pseudoinverse.
To show the existence of $D^{(n+1)}$ satisfying $\bra{\nu} = \bra{\nu'}$,
we show the following two properties:
\begin{enumerate}[label=(\alph*)]
 \item If
       \begin{alignat}{1}
        \bra{\nu'} \Xi &= \bra{\nu'}, \quad
        \braket{\nu'|\tG^{(n)}{}^+|\nu'} \le 1
        \label{eq:nu_}
       \end{alignat}
       holds, then there exists a unitary matrix $D^{(n+1)}$ such that $\bra{\nu} = \bra{\nu'}$.
 \item Equation~\eqref{eq:nu_} holds.
\end{enumerate}

First, we show Property~(a).
Let $\bra{\nu''} \coloneqq \bra{\nu'} \tG^{(n)}{}^\frac{1}{2}{}^+$.
From $\ket{\psi^{(n)}_n} = \Theta \tG^{(n)}{}^\frac{1}{2}\ket{n}$, we obtain
\begin{alignat}{1}
 \braket{\nu''|\Theta^\dagger|\psi^{(n)}_n}
 &= \braket{\nu'|\tG^{(n)}{}^\frac{1}{2}{}^+ \Theta^\dagger \Theta \tG^{(n)}{}^\frac{1}{2}|n}
 = \braket{\nu'|\Xi|n} = tu, \nonumber \\
 \label{eq:nu_Theta_psi}
\end{alignat}
where the second equality follows from
\begin{alignat}{1}
 \tG^{(n)}{}^\frac{1}{2}{}^+ \Theta^\dagger \Theta \tG^{(n)}{}^\frac{1}{2}
 &= \tG^{(n)}{}^\frac{1}{2}{}^+ \Xi \tG^{(n)}{}^\frac{1}{2}
 = \tG^{(n)}{}^\frac{1}{2}{}^+ \tG^{(n)}{}^\frac{1}{2} = \Xi,
\end{alignat}
and the last equality follows from the first equation in Eq.~\eqref{eq:nu_} and
$\braket{\nu'|n} = \braket{n+1|\tG^{(n+1)}|n} = tu$.
Also, from the second equation in Eq.~\eqref{eq:nu_},
we obtain $\braket{\nu''|\Theta^\dagger \Theta|\nu''} = \braket{\nu''|\nu''} \le 1$.
Choose an element $\ket{\check{\nu}'}$ in the null space of $\Theta^\dagger$
such that $\braket{\check{\nu}'|\check{\nu}'} = 1 - \braket{\nu''|\nu''}$, and define
\begin{alignat}{1}
 \ket{\check{\nu}} &\coloneqq \Theta \ket{\nu''} + \ket{\check{\nu}'};
 \label{eq:check_nu}
\end{alignat}
then, $\braket{\check{\nu}|\check{\nu}} = 1$ holds, and thus $\ket{\check{\nu}}$ is normalized.
Multiplying both sides of Eq.~\eqref{eq:check_nu} from the left by $\bra{\psi^{(n)}_n}$
and using
$\braket{\check{\nu}'|\psi^{(n)}_n} = \braket{\check{\nu}'|\Theta \tG^{(n)}{}^\frac{1}{2}|n} = 0$
and Eq.~\eqref{eq:nu_Theta_psi}, we obtain
$\braket{\check{\nu}|\psi^{(n)}_n} = tu = \braket{+|\tU|+}$.
Therefore,
\begin{alignat}{1}
 D^{(n+1)} \ket{\psi^{(n)}_n} &= \ket{+} \ot \ket{0}, \quad
 D^{(n+1)} \ket{\check{\nu}} = \tU^\dagger \ket{+} \ot \ket{0}
 \label{eq:Dnp_psin_checknu}
\end{alignat}
can be satisfied by choosing a unitary matrix of order $dd'$ for $D^{(n+1)}$,
where $\ket{0}$ is a normalized vector in $\V'_n$.
Let
\begin{alignat}{1}
 \tD &\coloneqq D^{(n+1)}{}^\dagger (\tU \ot \I_{d'}) D^{(n+1)}.
\end{alignat}
Substituting Eqs.~\eqref{eq:psik} and \eqref{eq:psin} into Eq.~\eqref{eq:nu}, we obtain
\begin{alignat}{1}
 \bra{\nu} &= \bra{\psi^{(n)}_n} \tD
 \begin{bmatrix}
  \ket{\psi^{(n)}_0} & \ket{\psi^{(n)}_1} & \cdots & \ket{\psi^{(n)}_n} \\
 \end{bmatrix}
 = \bra{\psi^{(n)}_n} \tD \Theta \tG^{(n)}{}^\frac{1}{2}, \nonumber \\
 \label{eq:nu_psi_tD}
\end{alignat}
where the second equality follows from Eq.~\eqref{eq:Theta}.
Also, we have
\begin{alignat}{1}
 \tD \ket{\check{\nu}}
 &= D^{(n+1)}{}^\dagger (\tU \ot \I_{d'}) D^{(n+1)} \ket{\check{\nu}} \nonumber \\
 &= D^{(n+1)}{}^\dagger (\tU \ot \I_{d'}) (\tU^\dagger \ket{+} \ot \ket{0}) \nonumber \\
 &= D^{(n+1)}{}^\dagger (\ket{+} \ot \ket{0}) \nonumber \\
 &= \ket{\psi^{(n)}_n},
\end{alignat}
i.e., $\bra{\psi^{(n)}_n} \tD = \bra{\check{\nu}}$,
where the second and last line follow from Eq.~\eqref{eq:Dnp_psin_checknu}.
Multiplying both sides from the right by $\Theta$ yields
\begin{alignat}{1}
 \bra{\psi^{(n)}_n} \tD \Theta &= \bra{\check{\nu}} \Theta
 = \bra{\nu''} \Xi = \bra{\nu''}.
\end{alignat}
Thus, we have, from Eq.~\eqref{eq:nu_psi_tD},
\begin{alignat}{1}
 \bra{\nu} \tG^{(n)}{}^{\frac{1}{2}+} &= \bra{\psi^{(n)}_n} \tD \Theta
 = \bra{\nu''} = \bra{\nu'} \tG^{(n)}{}^{\frac{1}{2}+}.
\end{alignat}
Multiplying the first and last terms of this equation from the right by $\tG^{(n)}{}^{\frac{1}{2}}$
and using $\bra{\nu} \Xi = \bra{\nu}$
[which follows from Eq.~\eqref{eq:nu_psi_tD}]
and $\bra{\nu'} \Xi = \bra{\nu'}$ of Eq.~\eqref{eq:nu_},
we obtain $\bra{\nu} = \bra{\nu'}$.

Next, we show Property~(b).
As preparation, we compute $\tG^{(n)}{}^+$.
Through algebraic manipulation, we find that $\tG^{(n)}{}^+$ can be expressed in the form
\begin{alignat}{1}
 \braket{i|\tG^{(n)}{}^+|j} &=
 \begin{dcases}
  a_0, & i = j ~\text{and}~ i \equiv_2 0, \\
  a_1, & i = j ~\text{and}~ i \equiv_2 1, \\
  a_2 u^{i-j}, & i \neq j ~\text{and}~ i \equiv_2 j \equiv_2 0, \\
  a_3 u^{i-j}, & i \neq j ~\text{and}~ i \equiv_2 j \equiv_2 1, \\
  a_4 u^{i-j}, & \text{otherwise}. \\
 \end{dcases}
\end{alignat}
For example, we have
\begin{alignat}{1}
 \tG^{(4)}{}^+ =
 \begin{bmatrix}
  a_0 & a_4 u^{-1} & a_2 u^{-2} & a_4 u^{-3} & a_2 u^{-4} \\
  a_4 u & a_1 & a_4 u^{-1} & a_3 u^{-2} & a_4 u^{-3} \\
  a_2 u^2 & a_4 u & a_0 & a_4 u^{-1} & a_2 u^{-2} \\
  a_4 u^3 & a_3 u^2 & a_4 u & a_1 & a_4 u^{-1} \\
  a_2 u^4 & a_4 u^3 & a_2 u^2 & a_4 u & a_0 \\
 \end{bmatrix}.
\end{alignat}
Let
\begin{alignat}{1}
 m &\coloneqq
 \begin{dcases}
  n / 2, & n \equiv_2 0, \\
  (n + 1) / 2, & n \equiv_2 1, \\
 \end{dcases} \nonumber \\
 f_1(l) &\coloneqq m t (1-c^2) + (1 + ltc)(c-t), \nonumber \\
 f_2(l) &\coloneqq (n - m + 1) t (1-c^{-2}) + (1 + ltc^{-1})(c^{-1}-t);
\end{alignat}
then, when $c \neq t$, we have
\begin{alignat}{2}
 a_0 &= \frac{f_1(n-1)}{(1-tc)f_1(n)}, &\quad
 a_1 &= \frac{f_2(n-1)}{(1-tc^{-1}) f_2(n)}, \nonumber \\
 a_2 &= - \frac{tc(c-t)}{(1-tc) f_1(n)}, &\quad
 a_3 &= - \frac{tc^{-1} (c^{-1} - t)}{(1-tc^{-1}) f_2(n)}, \nonumber \\
 a_4 &= - \frac{tc}{f_1(n)}.
\end{alignat}
In this case, since $\tG^{(n)}$ is nonsingular, we have $\Xi = \I_{n+1}$.
When $c = t$, we have
\begin{alignat}{4}
 a_0 &= \frac{f_1(n-1)}{(1-tc)f_1(n)}, &\quad
 a_1 = a_3 &= \frac{1 + (n - m) t^2}{m^2 (1-t^2)}, \nonumber \\
 a_2 &= 0, &\quad
 a_4 &= - \frac{tc}{f_1(n)}
\end{alignat}
and
\begin{alignat}{1}
 \Xi &= \sum_{i=0}^{n - m} \ket{2i}\bra{2i}
 + \frac{1}{m} \sum_{i=1}^{m} \sum_{j=1}^{m} u^{2(i-j)} \ket{2i-1}\bra{2j-1}.
\end{alignat}
In this case, $\tG^{(n)}$ with $n \ge 3$ is not nonsingular.

We first show $\bra{\nu'} \Xi = \bra{\nu'}$.
From Eq.~\eqref{eq:nu}, we have
\begin{alignat}{1}
 \bra{\nu'} &= \sum_{j=0}^n \braket{n+1|\tG^{(n+1)}|j}\bra{j}
 = t \sum_{j=0}^n u^{n+1-j} \sqrt{c_{n+1} c_j} \bra{j},
\end{alignat}
where the second equality follows from Eq.~\eqref{eq:iGnj}.
Thus, we obtain $\bra{\nu'} \Xi = \bra{\nu'}$.

Next, we show $\braket{\nu'|\tG^{(n)}{}^+|\nu'} \le 1$.
Let
\begin{alignat}{1}
 \nu'_k &\coloneqq \braket{\nu'|k} u^{-(n+1-k)} = t \sqrt{c_{n+1} c_k};
\end{alignat}
then, $t \ge \nu'_0 = \nu'_1 c = \nu'_2 = \nu'_3 c = \cdots$ holds.
Considering the case $c \neq t$, we obtain
\begin{alignat}{1}
 \lefteqn{ \braket{\nu'|\tG^{(n)}{}^+|\nu'} } \nonumber \\
 &= \nu'_0{}^2(n - m + 1)[a_0 + (n - m) a_2] \nonumber \\
 &\quad + 2 \nu'_0 \nu'_1 m (n - m + 1) a_4 + \nu'_1{}^2 m [a_1 + (m - 1) a_3]
 \nonumber \\
 &= \frac{\nu'_1{}^2 c^2}{f_1(n)} \left[
 (n - m + 1) \left[ \frac{f_1(m-1)}{1 - tc} - 2tn \right]
 + \frac{m f_2(n - m)}{1 - tc^{-1}} \right] \nonumber \\
 &\le \frac{t^2}{f_1(n)} \left[
 (n - m + 1) \left[ \frac{f_1(m-1)}{1 - tc} - 2tn \right]
 + \frac{m f_2(n - m)}{1 - tc^{-1}} \right] \eqqcolon \tp.
\end{alignat}
Since $f_1(n) - c^2 f_2(n-m) = m t (1-tc)$ holds, we have
\begin{alignat}{1}
 \tp \le 1 &\quad\Leftrightarrow\quad
 (n - m + 1) \left[ \frac{f_1(m - 1)}{1 - tc} - 2tm \right] \nonumber \\
 &\quad\phantom{\Leftrightarrow}\quad
 + \frac{m}{c(c-t)} [f_1(n) - m t(1 - tc)] \le \frac{f_1(n)}{t^2} \nonumber \\
 &\quad\Leftrightarrow\quad
 ntc(c-t) + m t(1-c^2) + c(1-t^2) \ge 0.
\end{alignat}
From $0 \le t \le c \le 1$, the last inequality holds,
which yields $\braket{\nu'|\tG^{(n)}{}^+|\nu'} \le \tp \le 1$.
Considering the case $c = t$, we obtain
\begin{alignat}{1}
 \lefteqn{ \braket{\nu'|\tG^{(n)}{}^+|\nu'} } \nonumber \\
 &= \nu'_0{}^2(n - m + 1)a_0
 + 2 \nu'_0 \nu'_1 m (n - m + 1) a_4 + \nu'_1{}^2 m^2 a_1 \nonumber \\
 &= \nu'_1{}^2 \left[ \frac{(n - m + 1)t^2}{1-t^2}
 + \frac{1 - (n - m + 2)t^2}{1-t^2} \right] \nonumber \\
 &= \nu'_1{}^2 \le 1,
\end{alignat}
where the third line follows from $f_1(n-1) = f_1(n)$
and the inequality follows from $0 \le \nu'_1 \le 1$.

\section{Local unambiguous identification via Bayesian updating} \label{sec:Bayes}

Let the change point be denoted by $k$, and consider the local unambiguous identification
described in Sec.~\ref{sec:LowerBound}.
Let the outcome of the measurement $\{ \Pi^{(n)}_m \}_m$ at step $n \in \{1,\dots,N\}$
be denoted by $s_n \in \{0,1,\question\}$, and for convenience, set $s_0 = 0$.
We assume that if $s_{n-1} = \question$ holds, then the outcome of the measurement
$\{ \Pi^{(n)}_m \}_m$ is either $0$ or $\question$,
i.e., the probability of $s_n = 1$ is zero.
This assumption does not affect the optimality of the average success probability.
For each $n \in \{0,\dots,N\}$, let $S_{n|k}$ and $T_{n|k}$ denote
the conditional probabilities that $s_n$ is $0$ and $\question$, respectively,
given that the change point is $k$.
In the local unambiguous identification described in Sec.~\ref{sec:LowerBound},
\begin{alignat}{1}
 S_{n|n} &= S_{n|n+1} = \cdots = S_{n|N} \eqqcolon S_n, \nonumber \\
 T_{n|n} &= T_{n|n+1} = \cdots = T_{n|N} \eqqcolon T_n
\end{alignat}
holds.
In particular, from $s_0 = 0$, $S_0 = 1$ and $T_0 = 0$ hold.
For each $n \in \{1,\dots,N\}$, let $p_{n|k}$ denote the conditional probability of $s_n = 0$
given that the change point is $k$ and $s_{n-1} = 0$ holds.
Also, for each $n \in \{2,\dots,N\}$, let $q_{n|k}$ denote the conditional probability of
$s_n = 0$ given that the change point is $k$ and $s_{n-1} = \question$ holds.
Similar to $S_{n|k}$ and $T_{n|k}$, we have
\begin{alignat}{1}
 p_{n|n} &= p_{n|n+1} = \cdots = p_{n|N} \eqqcolon p_n, \nonumber \\
 q_{n|n} &= q_{n|n+1} = \cdots = q_{n|N} \eqqcolon q_n.
\end{alignat}
Our goal is to determine the values of $\{ p_n \}_{n=1}^N$ and $\{ q_n \}_{n=2}^N$
that maximize the average success probability.
Since we want to maximize the success probability for $k < N$,
it suffices to consider the case in which the conditional probability of $s_{k+1} = 1$
given that the change point is $k$ and $s_k = 0$ holds is $f(p_k)$.
Let $P(m|k)$ denote the conditional probability that the identification result is $m$
given that the change point is $k$.
Then, when $k < N$, $P(k|k)$ is equal to the joint probability of $s_k = 0$ and $s_{k+1} = 1$,
i.e., $S_k f(p_{k+1})$.
Also, when $k = N$, $P(k|k)$ is equal to the probability of $s_N = 0$, i.e., $S_N$.
Thus, we have
\begin{alignat}{1}
 P(k|k) &=
 \begin{dcases}
  S_k f(p_{k+1}), & k < N, \\
  S_N, & k = N.
 \end{dcases}
\end{alignat}
Figure~\ref{fig:bayes} illustrates the transition of identification outcomes
for the case $k = 2$ and $N \ge 3$.
The average success probability is given by
\begin{alignat}{1}
 \frac{1}{N+1} \sum_{k=0}^N P(k|k)
 &= \frac{1}{N+1} \left[ \sum_{k=0}^{N-1} S_k f(p_{k+1}) + S_N \right].
 \label{eq:Ps_bayes}
\end{alignat}
For each $n \in \{1,\dots,N\}$, $S_n$ is given by
\begin{alignat}{1}
 S_n &= S_{n-1} p_n + T_{n-1} q_n.
 \label{eq:Sn}
\end{alignat}
Also, for each $n \in \{1,\dots,k\}$, since $s_n$ never becomes $1$,
$T_n$ is given by
\begin{alignat}{1}
 T_n &= S_{n-1} (1 - p_n) + T_{n-1} (1 - q_n).
 \label{eq:Tn}
\end{alignat}
From these equations, it is clear that both $S_n$ and $T_n$ are determined by
$\{ p_i \}_{i=1}^n$ and $\{ q_i \}_{i=2}^n$.
\begin{figure}[bt]
 \centering
 \includegraphics[scale=1.0]{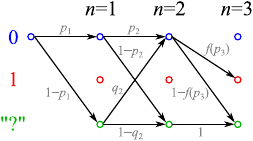}
 \caption{State transition of local identification via Bayesian updating
 for $k = 2$ and $N \ge 3$.
 In this case, the success probability is $S_2 f(p_3)$.}
 \label{fig:bayes}
\end{figure}

First, we consider the optimization of $p_N$ and $q_N$.
Among the terms on the right-hand side of Eq.~\eqref{eq:Ps_bayes},
those that depend on $p_N$ and $q_N$ are $S_{N-1} f(p_N)$ and $S_N$.
From Eqs.~\eqref{eq:Sn} and \eqref{eq:Tn}, we have
\begin{alignat}{1}
 S_{N-1} f(p_N) + S_N &= S_{N-1} f(p_N) + S_{N-1} p_N + T_{N-1} q_N \nonumber \\
 &= S_{N-1} \left[ f(p_N) + p_N \right] + T_{N-1} q_N.
 \label{eq:STN}
\end{alignat}
Therefore, $p_N$ should be chosen to maximize $f(p_N) + p_N$,
and $q_N$ should be chosen as $\ol{p}$.
The optimal values of $p_N$ and $q_N$ do not depend on
$\{ p_i \}_{i=1}^{N-1}$ or $\{ q_i \}_{i=2}^{N-1}$.

Next, for each $n = N-1, N-2, \dots, 1$, we consider optimizing $p_n$ and $q_n$
while fixing $\{ p_i \}_{i=n+1}^N$ and $\{ q_i \}_{i=n+1}^N$.
Assume that there exist constants $A_n$ and $B_n$ such that
\begin{alignat}{1}
 \sum_{k=n}^{N-1} S_k f(p_{k+1}) + S_N &= S_n A_n + T_n B_n.
 \label{eq:ABn}
\end{alignat}
This assumption holds for $n = N - 1$ due to Eq.~\eqref{eq:STN}
(with $A_{N-1} = f(p_N) + p_N$ and $B_{N-1} = q_N$).
Among the terms on the right-hand side of Eq.~\eqref{eq:Ps_bayes},
those that depend on $p_n$ and $q_n$ are
$S_{n-1} f(p_n), S_n f(p_{n+1}), \dots, S_{N-1} f(p_N)$ and $S_N$.
We have
\begin{alignat}{1}
 \sum_{k=n-1}^{N-1} S_k f(p_{k+1}) + S_N
 &= S_{n-1} f(p_n) + S_n A_n + T_n B_n \nonumber \\
 &= S_{n-1} f(p_n) + (S_{n-1} p_n + T_{n-1} q_n) A_n \nonumber \\
 &\quad + \left[ S_{n-1} (1-p_n) + T_{n-1} (1-q_n) \right] B_n \nonumber \\
 &= S_{n-1} \left[ f(p_n) + p_n A_n + (1-p_n) B_n \right] \nonumber \\
 &\quad + T_{n-1} \left[ q_n A_n + (1-q_n) B_n \right] \nonumber \\
 &= S_{n-1} A_{n-1} + T_{n-1} B_{n-1},
\end{alignat}
where the second equality follows from Eqs.~\eqref{eq:Sn} and \eqref{eq:Tn},
and the last equality defines
\begin{alignat}{1}
 A_{n-1} &\coloneqq f(p_n) + p_n A_n + (1-p_n) B_n, \nonumber \\
 B_{n-1} &\coloneqq q_n A_n + (1-q_n) B_n.
\end{alignat}
From $S_{n-1} \ge 0$ and $T_{n-1} \ge 0$, $p_n$ and $q_n$ should be chosen to
maximize $A_{n-1}$ and $B_{n-1}$, respectively.
This is equivalent to maximizing $(A_n - B_n) p_n + f(p_n)$ and $(A_n - B_n) q_n$,
which can be done easily.
The optimal values of $p_n$ and $q_n$ do not depend on
$\{ p_i \}_{i=1}^{n-1}$ or $\{ q_i \}_{i=2}^{n-1}$.
Furthermore, once $p_n$ and $q_n$ are chosen and fixed at their optimal values,
$A_{n-1}$ and $B_{n-1}$ become constants.
Therefore, by redefining $n-1$ as $n$, the assumption that
there exist constants $A_n$ and $B_n$ satisfying Eq.~\eqref{eq:ABn} continues to hold.
Iteratively applying this procedure from $n = N$ to $n = 1$ yields
the optimal values of ${ p_i }{i=1}^N$ and ${ q_i }{i=2}^N$,
with the resulting average success probability given by $A_0 / (N + 1)$.

\section{The case of $\ket{\psi^{(N)}_0}, \dots, \ket{\psi^{(N)}_N}$ being linearly dependent}
\label{sec:Gdep}

In the proof of Theorem~\ref{thm:unamb_opt}, we considered the case
in which $\ket{\psi^{(N)}_0}, \dots, \ket{\psi^{(N)}_N}$ are linearly independent.
We here consider the case in which $\ket{\psi^{(N)}_0}, \dots, \ket{\psi^{(N)}_N}$
are linearly dependent.
Let $\cI$ denote the set of all $i \in \{0,\dots,N\}$ for which there exists
a state $\ket{z}$ satisfying $\braket{z|\psi^{(N)}_j} = \delta_{i,j}$
for any $j \in \{0,\dots,N\}$.
Let $x_i$ denote the success probability when the change point is $i$, for
an arbitrary unambiguous measurement for $\{ \ket{\psi^{(N)}_k} \}_{k=0}^N$.
Then, for each $i \not\in \cI$, $x_i = 0$ must hold; indeed,
no unambiguous measurement exists for which $x_i > 0$ holds.
Let $\bV_{\cI}$ denote the complex vector space spanned by
$\{ \ket{\psi^{(N)}_i} \}_{i \in \cI}$; then,
since $\{ \ket{\psi^{(N)}_i} \}_{i \in \cI}$ is a linearly independent set of vectors,
the dimension of $\bV_{\cI}$ is $|\cI|$, where $|\cI|$ is the number of elements in $\cI$.
For each $i \in \cI$, choose $\ket{\tpsi_i} \in \bV_{\cI}$ such that
$\braket{\tpsi_i|\psi^{(N)}_j} = \delta_{i,j}$ holds for any $j \in \{0,\dots,N\}$,
which is readily seen to be possible.
For each $i \not\in \cI$, define $\ket{\tpsi_i}$ to be the zero vector.
Equation~\eqref{eq:I_nas} is a necessary and sufficient condition
for the existence of an unambiguous measurement for which, for each $i \in \{0,\dots,N\}$,
the success probability is $x_i$ when the change point is $i$.
In what follows, we show that if Eq.~\eqref{eq:G_nas0} holds, then Eq.~\eqref{eq:I_nas} also holds.

Let $\bW_\cI$ denote the complex vector space spanned by $\{ \ket{i} \}_{i \in \cI}$;
then, the dimension of $\bW_\cI$ is $|\cI|$.
For each complex vector space $\bV$, let $P_\bV$ denote the orthogonal projection onto $\bV$.
Then, we have $P_{\bW_\cI} = \sum_{i \in \cI} \ket{i} \bra{i}$.
Let
\begin{alignat}{1}
 \Psi &\coloneqq \sum_{i=0}^N \ket{\psi^{(N)}_i} \bra{i}, \quad
 \Psi_\cI \coloneqq \sum_{i \in \cI} \ket{\psi^{(N)}_i} \bra{i}, \quad
 \Phi \coloneqq \sum_{i=0}^N \ket{\tpsi_i} \bra{i}.
\end{alignat}
From $G = \Psi^\dagger \Psi$ and $\Psi P_{\bW_\cI} = \Psi_\cI$,
multiplying both sides of Eq.~\eqref{eq:G_nas0} from the left and right by $P_{\bW_\cI}$ yields
\begin{alignat}{1}
 \Psi_\cI^\dagger \Psi_\cI &\ge \sum_{i \in \cI} x_i \ket{i} \bra{i}
 = \sum_{i=0}^N x_i \ket{i} \bra{i},
 \label{eq:G_nas_cI}
\end{alignat}
where the equality follows from $x_i = 0$ for any $i \not\in \cI$.
$\Psi_\cI$ and $\Phi$ can be regarded as linear maps from $\bW_\cI$ to $\bV_\cI$.
From
\begin{alignat}{1}
 \Phi^\dagger \Psi_\cI &= \sum_{j=0}^N \ket{j} \sum_{i \in \cI} \braket{\tpsi_j|\psi^{(N)}_i} \bra{i}
 = \sum_{i \in \cI} \ket{i} \bra{i} = P_{\bW_\cI},
\end{alignat}
$\Psi_\cI \colon \bW_\cI \to \bV_\cI$ is bijective, and
$\Phi^\dagger \colon \bV_\cI \to \bW_\cI$ is its inverse.
Multiplying the first and last terms of Eq.~\eqref{eq:G_nas_cI} from the left and right by $\Phi$
and $\Phi^\dagger$, respectively, and using $\Psi_\cI \Phi^\dagger = P_{\bV_\cI}$, we have
\begin{alignat}{1}
 P_{\bV_\cI} &\ge \sum_{i=0}^N x_i \ket{\tpsi_i} \bra{\tpsi_i}.
\end{alignat}
Thus, from $\I \ge P_{\bV_\cI}$, we obtain Eq.~\eqref{eq:I_nas}.
Note that, conversely, if Eq.~\eqref{eq:I_nas} holds, then multiplying both sides
from the left and right by $\Psi^\dagger$ and $\Psi$, respectively, yields Eq.~\eqref{eq:G_nas0}.

\bibliographystyle{apsrev4-1}
%


\end{document}